\keywords{Lambek calculus, relational semantics, completeness}
\theoremstyle{plain} 
\newcommand{\BS}{\mathop{\backslash}}
\newcommand{\SL}{\mathop{/}}
\newcommand{\BSU}{\BS\nolimits_U}
\newcommand{\SLU}{\SL\nolimits_{\!U}}
\newcommand{\LL}{\mathbf{L}}
\newcommand{\LS}{\mathbf{L}^{\!\mathbf{*}}}
\newcommand{\LE}{\mathbf{L}^{\!\Lambda}}
\newcommand{\Zero}{\mathbf{0}}
\newcommand{\One}{\mathbf{1}}
\newcommand{\LEwuz}{\LE{\wedge}\Zero\One}
\newcommand{\LEwuzbc}{\LEwuz bc}
\newcommand{\LEwuzK}{\LEwuz\mathrm{ItD}}
\newcommand{\LEw}{\LE{\wedge}}
\newcommand{\LLw}{\LL{\wedge}}
\newcommand{\LEDwuz}{\LE_{\BS,\SL}{\wedge}\Zero\One}
\newcommand{\LEDw}{\LE_{\BS,\SL}{\wedge}}
\newcommand{\eLEwuz}{\boldsymbol{!}\LEwuz}
\newcommand{\Fm}{\mathrm{Fm}}
\newcommand{\yields}{\to}
\newcommand{\Hc}{\mathcal{H}}
\newcommand{\Lc}{\mathcal{L}}
\newcommand{\Md}{\mathcal{M}}
\newcommand{\Pc}{\mathcal{P}}
\newcommand{\Kc}{\mathcal{K}}
\newcommand{\NN}{\mathbb{N}}
\newcommand{\Af}{\mathfrak{A}}
\newcommand{\UA}{\One_{\Af}}
\newcommand{\ZA}{\Zero_{\Af}}
\begin{document}

\title[R-Models for the Lambek Calculus with Intersection and Constants]{Relational Models for the Lambek Calculus \\ with Intersection and Constants{\rsuper*}} 
\titlecomment{{\lsuper*}This article is an extended version of the conference paper presented at RAMiCS 2021.}
\thanks{The author is grateful to Daniel Rogozin for fruitful discussions. The work was supported by the Theoretical Physics and Mathematics
Advancement Foundation ``BASIS''}	

\author[S.\,L.~Kuznetsov]{Stepan L. Kuznetsov\lmcsorcid{0000-0003-0025-0133}}

\address{Steklov Mathematical Institute of RAS, 8 Gubkina St., Moscow, Russia}	
\email{sk@mi-ras.ru}  





\begin{abstract}
  \noindent 
  We consider relational semantics (R-models) for the Lambek calculus extended with intersection and explicit constants for zero and unit. For its variant without constants and a restriction which disallows empty antecedents, Andr\'eka and Mikul\'as (1994) prove strong completeness. We show that it fails without this restriction, but, on the other hand, prove weak completeness for non-standard interpretation of constants. For the standard interpretation, even weak completeness fails. The weak completeness result extends to an infinitary setting, for  so-called iterative divisions (Kleene star under division). We also prove  strong completeness results for product-free fragments. 
\end{abstract}

\maketitle

\section*{Introduction}

The {\em Lambek calculus} $\LL$ was originally introduced back in 1958 for modelling natural language syntax~\cite{Lambek1958}. From a modern point of view, the Lambek calculus is considered as the most basic associative substructural logic, being the algebraic logic of {\em residuated semigroups}~\cite{GalatosRLbook}. On the other hand, as noticed by Abrusci~\cite{Abrusci1990}, the Lambek calculus can be viewed as a non-commutative, multiplicative-only variant of Girard's~\cite{Girard1987} {\em linear logic.}

A residuated semigroup is a semigroup with a partial order $\preceq$ and division operations $\BS$ and $\SL$ obeying the following equivalences:
\[
b \preceq a \BS c \iff a \cdot b \preceq c \iff a \preceq c \SL b.
\]
Notice that here divisions and multiplication are connected via partial order, not equality. The idea of such divisions, also called {\em residuals,} goes back to the works of Krull~\cite{Krull1924} and Ward and Dilworth~\cite{WardDilworth1939}.

Formulae of the Lambek calculus are built from a variables using the operations of residuated semigroups: $\cdot$, $\BS$, $\SL$. From the point of view of first-order logic, such formulae are terms in the signature of residuated semigroups. Furthermore, $\LL$ operates expressions of the form $A \to B$, where $A$ and $B$ are formulae. Here $\to$ corresponds to $\preceq$, thus, from the first-order point of view, these are atomic formulae. We call such expressions {\em atomic sequents.}

The set of theorems of~$\LL$ can be semantically described as the set of all sequents which are true on any residuated semigroup under any interpretation of variables. Syntactically, $\LL$ is axiomatised as a Gentzen-style sequent calculus~\cite{Lambek1958}. This calculus is presented in Section~\ref{S:calculus} below. Derivable objects of this sequent calculus are {\em sequents} of the form $A_1, \ldots, A_n \to B$, which generalise atomic sequents defined above.

Natural language applications, connections to linear logic, and algebraic interpretations on residuated structures suggest extending the Lambek calculus with extra operations.
Adding new operations and constants to the algebraic construction of residated semigroups results in extending the Lambek calculus. In this article, we focus on extending~$\LL$ with the following:
\begin{itemize}
 \item the unit constant~$\One$, which is the unit for multiplication;
 \item the zero constant~$\Zero$, which is the smallest element for $\preceq$ (using division operations one can also prove that $\Zero$ is the zero element for multiplication);
 \item the intersection, or {\em meet} operation $\wedge$: $a \wedge b = \inf_{\preceq} \{ a, b \}$; meet turns the preorder into a lower semilattice.
\end{itemize}
Note that extending $\LL$ with $\One$ and $\wedge$ is also due to Lambek~\cite{Lambek1961,Lambek1969}.

An intricate issue arises when extending~$\LL$ with the unit constant~$\One$: unlike most others, this extension is not a {\em conservative} one. Consider the atomic sequent $(p \BS p) \BS q \to q$, which belongs to the original language of $\LL$. This atomic sequent is true on all residuated monoids (i.e., residuated semigroups with the unit). However, it fails to be true on all residuated semigroups (which form a larger class of algebraic structures). 

Thus, even in the language without~$\One$, the algebraic logic of all residuated monoids differs from~$\LL$. This logic is called {\em the Lambek calculus allowing empty antecedents}~\cite{Lambek1961} and is commonly denoted by $\LS$. We prefer, however, an alternative notation, $\LE$, in order to avoid notation conflicts with Kleene star which we shall use, in the form of iterative divisions, in one of the sections of this article. The term ``... allowing empty antecedents'' comes from the Gentzen-style formulation, see Section~\ref{S:calculus} below.

Residuated algebraic structures provide a very abstract and general framework for semantics of the Lambek calculus, its variants and extensions. This is traditional {\em provability} semantics,   which interprets theoremhood and entailment, not proofs. More modern {\em semantics of proofs} (see, e.g.,~\cite{Almeida1996,Blute1996,Coecke2013,dePaiva2018}) are beyond the scope of this article. 

As usual, proving completeness results for abstract algebraic semantics is a simple application of the Lindenbaum -- Tarski construction (cf.~\cite{BlokPigozzi}). This construction gives {\em strong completeness,} that is, completeness not only for theoremhood, but also for the entailment relation (derivability vs.\ semantic entailment from sets of hypotheses).
More interesting issues arise when one starts {\em concretising} algebraic semantics, that is, considers specific classes of algebras. Completeness theorems for such classes, if they hold, are non-trivial results. We shall also see cases where completeness fails, or holds only in the weak sense (for theoremhood, not for entailment).

This article concentrates on {\em relational models,} or {\em R-models.} In R-models, variables and formulae are interpreted as binary relations over a non-empty set $W$.
The set $\Pc(W \times W)$ of {\em all} binary relations over $W$ has a natural structure of residuated monoid. Multiplication is relation composition:
\[
 R \cdot S = R \circ S = \{ (x,z) \mid (\exists y \in W) \, ((x,y) \in R \mbox{ and } (y,z) \in S) \}.
\]
Divisions are defined as follows:
\begin{align*}
& R \BS S = \{ (y,z) \mid (\forall x \in W) \, ((x,y) \in R \Rightarrow (x,z) \in S \},\\
& S \SL R = \{ (x,y) \mid (\forall z \in W) \,
((y,z) \in R \Rightarrow (x,z) \in S \}.
\end{align*}
The r\^{o}le of the unit is played by the diagonal relation $\delta = \{ (x,x) \mid x \in W \}$. 
The preorder is the set inclusion relation $\subseteq$. Meet is set-theoretic intersection, and zero is the empty relation $\varnothing$.

Interpretations of the Lambek calculus on residuated monoids of all binary relations on a non-empty set $W$ are called {\em unrelativised} or {\em square} R-models. Unrelativised R-models give natural semantics for $\LE$; the completeness theorem was proved by Andr\'{e}ka and Mikul\'{a}s~\cite{AndrekaMikulas1994}. Notice that the argument used for proving this completeness result does not easily extend to $\Zero$, $\One$, and $\wedge$. Issues with R-models for these extensions  form the main topic of this article.

More precisely, constants~$\Zero$ and~$\One$ ruin completeness w.r.t.\ square R-models, even in the weak sense (see Section~\ref{S:constants} below). In order to overcome this, we introduce variations of R-models with non-standard interpretation of constants.

As for the extension of $\LE$ with intersection, Mikul\'{a}s~\cite{Mikulas2015Synthese,Mikulas2015Studia} proves completeness w.r.t.\ square R-models, but only in the weak sense. We show (Section~\ref{S:counterexample}) that this is essential, and strong completeness fails. On the other hand, we strengthen Mikul\'{a}s' result by adding constants~$\Zero$ and~$\One$, with non-standard interpretations. Our argument also gives an alternative proof of Mikul\'{a}s' result (without constants), which happens to be extendable to an infinitary extension of the calculus, with so-called iterative divisions (Section~\ref{S:iterative}).
Finally, we show that strong completeness restores for the variant of $\LE$ with meet (intersection) but without product (Section~\ref{S:strong}). Thus, the results presented in this article fill some gaps in the theory of R-models for extensions of~$\LE$.

In order to provide adequate semantics for the original Lambek calculus~$\LL$, one needs to relax the definition of R-model. This is done by {\em relativising} it. Namely, instead of all binary relations on $W$ one may now consider only subsets of a fixed transitive relation $U$, which is called the ``universal'' one. By transitivity,  
composition (product) keeps this relativisation in place; for divisions, again, one needs to impose it implicitly:
\[
R \BSU S = \{ (y,z) \in U \mid (\forall x \in W) \; ((x,y) \in R \Rightarrow (x,z) \in S) \},
\]
and similarly for $S \SLU R$. This definition of division operations significantly depends on the choice of $U$. In particular, replacing $U$ by a superset might alter $\BSU$ and $\SLU$. Square R-models are a particular case of relativised ones, with $U = W \times W$.

Since $U$ is not required to be reflexive, in relativised R-models we deal with residuated semigroups, not monoids, that is, a class of models for $\LL$, not $\LE$. And indeed, as proved also by Andr\'{e}ka and Mikul\'{a}s~\cite{AndrekaMikulas1994}, $\LL$ is complete w.r.t.\ relativised R-models. Moreover, unlike the $\LE$ case, this result keeps valid for the extension of $\LL$ with $\wedge$.

Before going further, let us briefly compare R-models with other classes of models for the Lambek calculus and its extensions, which fit into the general algebraic framework. 

The original linguistic motivation of the Lambek calculus suggests interpretations on the algebra of formal languages. Such models are called {\em language models} or {\em L-models.} 
In L-models, multiplication is pairwise concatenation, and divisions are defined in a natural way. The difference between $\LL$ and $\LE$ is reflected by absence or presence of the empty word in the languages considered. Refraining from the empty word (which is, by the way, motivated by linguistic applications~\cite[\S\,2.5]{MootRetore}) modifies the definition of divisions, just like relativisation in R-models. Completeness theorems for $\LL$ and $\LE$ w.r.t.\ corresponding versions of L-models were proved by Pentus~\cite{Pentus1995,PentusFmonov}.
Language models might look quite similar to relational ones. However, there is the following significant difference: while any two words can be concatenated, for arrows in relations (i.e., pairs of elements of $W$) this is possible only if the end of the first one coincides with the beginning of the second one. Thus, unlike algebras of formal languages, algebras of relations have {\em divisors of zero:} if $|W| > 1$, there exist non-empty relations $R$ and $S$ over $W$ such that $R \circ S = \varnothing$.

Language models may be viewed as models on powersets of free semigroups (for $\LL$) or free monoids (for $\LE$). This suggests considering models on powersets of arbitrary semigroups or monoids (see~\cite{Buszkowski1986,Buszkowski1996}), since all of them have well-defined division operations, and also zero, unit, and intersection. 

The well-known {\em phase semantics} for linear logic~\cite{Girard1995,Abrusci1991JSL,deGroote2005,Kanovich2006}, which is also connected to denotational semantics~\cite{Bucciarelli2001}, is another species of semantics based on powersets of monoids. The crucial difference from L-models, however, is the usage of a {\em closure operator,} which in case of phase semantics is the operator taking the biorthogonal of a subset. Another variation of L-models which also uses a closure operation, but of a different nature, is the {\em syntactic concept lattice} semantics proposed by Wurm~\cite{Wurm2017}. In the same article Wurm also proposed a similar variant of R-models, also augmented with closure operations. 

In the presence of closure operations,
completeness proofs run more smoothly and cover richer extensions of the original system. In particular, besides meet one can also consider join (additive disjunction). For relational or language semantics without closure operations, where meet and join are set-theoretic intersection and union, adding join immediately leads to incompleteness, since the distributivity law (and its corollaries with only divisions and join, but not meet) is generally true, but not derivable in the calculus~\cite{OnoKomori1985,KanKuzSce2021IC}. With closure operators, distributivity is no longer generally true, which opens the way to completeness in the presence of join. In this article, we consider R-models without closure operators, and therefore consider only meet, not join.

Finally, relational models considered in this article should not be confused with {\em ternary} relational semantics for substructural logics (in particular, variants of the Lambek calculus), which offer a more Kripke-style approach (see~\cite{Dosen1992,Coumans2014}).

The rest of the article is organised as follows.
\begin{itemize}
\item
 In Section~\ref{S:calculus}, we give accurate definitions of the calculus in question, denoted by $\LEwuz$, and R-models for it. Here we also formulate known completeness results, by Andr\'eka and Mikul\'as~\cite{AndrekaMikulas1994,Mikulas2015Synthese,Mikulas2015Studia}. 
\item In Section~\ref{S:constants}, we provide examples showing that each of the explicit constants $\Zero$ and $\One$ ruins completeness, even in the weak sense (the example for $\One$ is not a new one). As a workaround, we relax the definition of R-models and define {\em R-models with non-standard interpretations of constants.} 
\item For these non-standard models, in Section~\ref{S:weak} we prove weak completeness of $\LEwuz$. As a corollary, we obtain an alternative proof of Mikul\'as' completeness result~\cite{Mikulas2015Synthese,Mikulas2015Studia} for the fragment of $\LEwuz$ without constants. 
\item Strong completeness fails, as proved in Section~\ref{S:counterexample}. The counterexample used in this section was suggested by Mikul\'as~\cite{Mikulas2015Studia}, but without proof. Here we fill this gap.
\item In Section~\ref{S:iterative}, we extend our weak completeness result to an infinitary setting. Namely, we consider so-called {\em iterative divisions,} which are Kleene stars in denominators of division operations. Being axiomatised by $\omega$-rules, iterative divisions are in fact infinite intersections. The result of Section~\ref{S:iterative} solves a natural question suggested by the earlier work~\cite{KuznetsovRyzhkova2020}. Namely, while~\cite{KuznetsovRyzhkova2020} features R-completeness for the calculus with positive iterative divisions under Lambek's restriction, Theorem~\ref{Th:itdiv} of Section~\ref{S:iterative} provides the corresponding completeness result without Lambek's restriction. 
\item The last two sections are devoted to the product-free fragment of~$\LEwuz$. We show (Section~\ref{S:strong}) that, if one removes the product, then strong completeness can be restored, but w.r.t.\ a modified class of models. However, inside the proof we essentially use the product, which raises a conservativity issue. Conservativity is proved in Section~\ref{S:exponential} in the strong form, i.e., for derivability from hypotheses, using an extension of $\LEwuz$ with the exponential modality (which allows a modalised variant of deduction theorem, cf.~\cite{LMSS}). An important corollary of the strong completeness result is that if we consider the product-free fragment without constants (that is, in the language of $\BS$, $\SL$, and $\wedge$), then it will be strongly complete w.r.t.\ square R-models in the standard sense.
\end{itemize}

This journal article is an extended version of the conference paper~\cite{RAMICS}, presented at RAMiCS 2021 and published in its proceedings. The extension here is twofold. First, along with the unit constant~$\One$, now we also consider the zero constant~$\Zero$. We show that this constant also raises issues with completeness, and resolve them using a non-standard interpretation for~$\Zero$ also. Second, we add Section~\ref{S:strong} (and Section~\ref{S:exponential} supporting it) with the strong completeness result for the product-free fragment. This result was presented as a short talk at AiML 2022, without formal publication.

\section{The Lambek Calculus with Intersection and R-Models}\label{S:calculus}

Let us recall the formulation of $\LEwuz$, the Lambek calculus with intersection and constants, in the form of a Gentzen-style sequent calculus. Formulae are constructed from variables ($p,q,r,\ldots$) and constants $\Zero$ (zero) and $\One$ (unit) using four binary connectives: $\cdot$ (multiplication), $\BS$ (left division), $\SL$ (right division), and $\wedge$ (intersection). The set of all formulae is denoted by $\Fm$. Formulae are denoted by capital Latin letters. Capital Greek letters denote sequences of formulae.
Sequents are expressions of the form $\Pi \yields B$. (Due to the non-commutative nature of the Lambek calculus, order in $\Pi$ matters.) Here $\Pi$ is called the antecedent and $B$ the succedent of the sequent.

Letter $\Lambda$, which appears in the name of the calculus, denotes the empty sequence of formulae. However, in sequents with empty antecedents we omit it and write just ${\yields B}$ instead of $\Lambda \yields B$.

The axioms and inference rules are as follows:
\[
 \infer[Id]{A \yields A}{}
 \qquad
 \infer[Cut]
 {\Gamma, \Pi, \Delta \yields C}
 {\Pi \yields A & \Gamma, A, \Delta \yields C}
\]
\[
 \infer[\BS L]
 {\Gamma, \Pi, A \BS B, \Delta \yields C}
 {\Pi \yields A & \Gamma, B, \Delta \yields C}
 \qquad
 \infer[\BS R]
 {\Pi \yields A \BS B}
 {A, \Pi \yields B}
 \qquad
 \infer[\cdot L]
 {\Gamma, A \cdot B, \Delta \yields C}
 {\Gamma, A, B, \Delta \yields C}
\]
\[
 \infer[\SL L]
 {\Gamma, B \SL A, \Pi, \Delta \yields C}
 {\Pi \yields A & \Gamma, B, \Delta \yields C}
 \qquad
 \infer[\SL R]
 {\Pi \yields B \SL A}
 {\Pi, A \yields B}
 \qquad
 \infer[\cdot R]
 {\Pi, \Delta \yields A \cdot B}
 {\Pi \yields A & \Delta \yields B}
\]
\[
 \infer[\wedge L_1]
 {\Gamma, A \wedge B, \Delta \yields C}
 {\Gamma, A, \Delta \yields C}
 \qquad
 \infer[\wedge L_2]
 {\Gamma, A \wedge B, \Delta \yields C}
 {\Gamma, B, \Delta \yields C}
 \qquad
 \infer[\wedge R]
 {\Pi \yields A \wedge B}
 {\Pi \yields A & \Pi \yields B}
\]
\[
\infer[\Zero L]{\Gamma, \Zero, \Delta \yields C}{}
\qquad
\infer[\One L]{\Gamma, \One, \Delta \yields C}{\Gamma, \Delta \yields C}
\qquad
\infer[\One R]{{} \yields \One}{}
\]

We consider both {\em pure derivability} of sequents and {\em derivability from hypotheses.} For the latter, having a set of sequents $\Hc$, we add them as extra non-logical axioms, and derive a given sequent $\Pi \to B$ using both the new axioms and axioms and inference rules of the original calculus. In this case, we say that $\Pi \to B$ is derivable, or {\em syntactically follows} from $\Hc$.
Pure derivability is derivability from an empty $\Hc$, i.e., using only axioms and rules of the original calculus.

For pure derivability, $\LEwuz$ enjoys cut elimination, that is, any derivable sequent can be derived without using $Cut$. The proof of cut elimination is standard, being an easy extension of Lambek's argument~\cite{Lambek1958} for the Lambek calculus. For derivability from hypotheses, cut elimination, in general, does not hold.

Besides $\LEwuz$ itself, we consider its {\em elementary fragments} with restricted sets of connectives and constants. Such a fragment is obtained by merely removing all axioms and rules which operate the connectives and constants which do not belong to the restricted set. In particular, we shall use the notation $\LEw$ for the fragment of $\LEwuz$ without constants $\Zero$ and $\One$ and $\LE$ for the fragment with only $\BS$, $\SL$, and $\cdot$.

Such a definition of elementary fragment raises issues of {\em conservativity.} For pure derivability, these issues are resolved by cut elimination. All rules, except $Cut$, enjoy the subformula property, therefore the set of theorems of an elementary fragment coincides with the set of sequents in the corresponding restricted language which are derivable in the full system $\LEwuz$. For derivability from hypotheses, the issue is more subtle. In principle, a derivation of $\Pi \to B$ from $\Hc$ which uses $Cut$ could involve connectives or constants which do not belong to the restricted language of $\Hc$ and $\Pi \to B$. We address this question below in Section~\ref{S:exponential}.

The original Lambek calculus $\LL$ is obtained from $\LE$ by imposing the so-called {\em Lambek's non-emptiness restriction,} that is, requiring non-emptiness of $\Pi$ in the $\BS R$ and $\SL R$ rules. This condition ensures that antecedents of all sequents in $\LL$ are non-empty ($\BS R$ and $\SL R$ are the only two rules which could possibly produce an empty antecedent). As noticed in the introduction, Lambek's restriction has linguistic motivation~\cite[\S\,2.5]{MootRetore}. One can also consider a version of $\LEw$ with Lambek's restriction, denoted by $\LLw$. On the other hand, the constant~$\One$ is incompatible with Lambek's restriction, so we shall not add constants to $\LL$ or~$\LLw$.

It is important to keep in mind that $\LL$ is not a conservative fragment of $\LE$, neither $\LLw$ is such of $\LEw$. Even if the sequent has a non-empty antecedent, empty antecedents could be necessary inside its derivation. As noticed in the introduction, an example is $(p \BS p) \BS q \yields q$, which is derivable in $\LE$, but not in $\LL$. Therefore, there is no easy way of translating results between $\LL$ and $\LE$, and certain properties of these systems differ, as we shall see below. 

 Now let us define the {\em relational semantics.} Here we define it for calculi without constants $\Zero$ and $\One$. Those are handled in the next section, since the most natural ways to interpret constants lead to incompleteness, and therefore non-standard models will be introduced.
 
 \begin{defi}\label{Df:Rmod}
 A relativised relational model (R-model)  is a triple $\Md = (W, U, v)$, where $W$ is a non-empty set, $U \subseteq W \times W$ is a transitive relation on $W$ called the  universal one, and $v \colon \Fm \to \Pc(U)$ is a valuation function mapping formulae to subrelations of $U$. The valuation function should obey the following conditions:
 \allowdisplaybreaks
 \begin{align*}
  & v(A \cdot B) = v(A) \circ v(B) = \{ 
  (x,z) \mid \exists y \in W \: \bigl( (x,y) \in v(A) \mbox{ and } (y,z) \in v(B) \bigr) \};\\
  & v(A \BS B) = v(A) \BSU v(B) = \{ 
  (y,z) \in U \mid \forall x  \in W \:
  \bigl( (x,y) \in v(A) \Rightarrow (x,z) \in v(B)  \bigr) \};\\
  & v(B \SL A) = v(B) \SLU v(A) = \{
  (x,y) \in U \mid \forall z  \in W \: \bigl(
  (y,z) \in v(A) \Rightarrow (x,z) \in v(B)  \bigr) \};\\
  & v(A \wedge B) = v(A) \cap v(B) = \{ (x,y) \mid (x,y) \in v(A) \mbox{ and } (x,y) \in v(B) \}.
 \end{align*}
 \end{defi}
 
 \begin{defi}
 An R-model $\Md = (W,U,v)$ is a square one if $U = W \times W$.
 \end{defi}

Arbitrary R-models and square R-models form natural classes of models for $\LLw$ and $\LEw$ respectively (and, thus, for $\LL$ and $\LE$). Let us define the truth condition of sequents in R-models.

\begin{defi}\label{Df:truth}
 A sequent of the form $A_1, \ldots, A_n \yields B$ is true in model $\Md = (W,U,v)$, if 
 $v(A_1) \circ \ldots \circ v(A_n) \subseteq v(B)$. For sequents with empty antecedents, truth is defined only in square R-models: ${\yields B}$ is true in $\Md = (W, W \times W, v)$, if $\delta = \{ (x,x) \mid x \in W \} \subseteq v(B)$.
\end{defi}

Let us  also recall the general notion of {\em strong} soundness and completeness of a logic $\Lc$ (formulated as a sequent calculus) w.r.t.\ a class of models $\Kc$.

\begin{defi}
Let $\Pi \to B$ and $\Hc$  be, respectively, a sequent and a set of sequents in the language of $\Lc$. The sequent $\Pi \to B$ semantically follows from $\Hc$ on the class of models $\Kc$, if for any model from $\Kc$ in which all sequents from $\Hc$ are true, the sequent $\Pi \to B$ is also true. This is denoted by $\Hc \vDash_\Kc \Pi \to B$.
\end{defi}

\begin{defi}
In the notations of the previous definition, $\Pi \to B$ syntactically follows from $\Hc$ in the logic $\Lc$, if $\Pi \to B$ is derivable in the calculus for $\Lc$ extended with sequents from $\Hc$ as extra axioms. This is denoted by $\Hc \vdash_\Lc \Pi \to B$.
\end{defi}

\begin{defi}
The logic $\Lc$ is strongly sound w.r.t.\ the class of models $\Kc$, if $\Hc \vdash_\Lc \Pi \to B$ entails $\Hc \vDash_\Kc \Pi \to B$ for any $\Pi \to B$ and $\Hc$.
\end{defi}

\begin{defi}
The logic $\Lc$ is strongly complete w.r.t.\ the class of models $\Kc$, if $\Hc \vDash_\Kc \Pi \to B$ entails $\Hc \vdash_\Lc \Pi \to B$ for any $\Pi \to B$ and $\Hc$.
\end{defi}

The Lambek calculus is a substructural system which does not enjoy a deduction theorem. Therefore, strong soundness and completeness, even for finite sets $\Hc$, are significantly different from their more usual weak counterparts (that derivability of a sequent without hypotheses yields its truth in all models from the given class, and the other way round). 

One can easily check that $\LLw$ and $\LEw$ are strongly sound w.r.t.\ the corresponding class of R-models: namely, all R-models for $\LL$ and square ones for $\LE$. The situation with completeness is non-trivial.  Andr\'{e}ka and Mikul\'{a}s~\cite{AndrekaMikulas1994} proved the following strong completeness results:

\begin{thm}[Andr\'eka, Mikul\'as 1994]\label{Th:AndrekaMikulas}
The calculus $\LLw$ is strongly complete w.r.t.\ the class of all R-models.
\end{thm}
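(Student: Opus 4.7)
My plan is to run a canonical-model construction. Suppose $\Hc \not\vdash_{\LLw} \Pi \yields B$ with $\Pi = A_1, \ldots, A_n$; I build a relativised R-model $\Md = (W, U, v)$ that makes every sequent of $\Hc$ true but refutes $\Pi \yields B$. Take $W$ to be the set of non-empty finite sequences over $\Fm$, let $U = \{(\alpha, \alpha\Sigma) : \alpha, \Sigma \text{ non-empty}\}$ be the strict-extension relation on $W$ (which is transitive and excludes the diagonal, in keeping with Lambek's non-emptiness restriction), and define the valuation on variables by
\[
v(p) = \{(\alpha, \alpha\Sigma) \in U : \Hc \vdash_{\LLw} \Sigma \yields p\};
\]
extend $v$ to compound formulae via Definition~\ref{Df:Rmod}.

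The crux is a truth lemma of the form $(\alpha, \alpha\Sigma) \in v(A) \iff \Hc \vdash_{\LLw} \Sigma \yields A$, proved by induction on $A$. The atomic case is the definition; the cases of $\cdot$ and $\wedge$ follow from the R-model clauses for composition and intersection together with $\cdot R, \cdot L, \wedge R$. Granted the lemma, soundness of each hypothesis $C_1, \ldots, C_k \yields D \in \Hc$ reduces to $Cut$: a pair in $v(C_1) \circ \cdots \circ v(C_k)$ decomposes into sub-sequences $\Sigma_i$ verifying $\Sigma_i \vdash C_i$ syntactically, and cutting against the hypothesis yields $\Sigma_1, \ldots, \Sigma_k \vdash D$ and hence membership in $v(D)$. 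Refutation of $\Pi \yields B$ is produced by the pair $(\alpha_0, \alpha_0 A_1 \cdots A_n)$, which lies in the composition $v(A_1) \circ \cdots \circ v(A_n)$ because $A_i \vdash A_i$, and lies outside $v(B)$ by the lemma together with $\Hc \not\vdash \Pi \yields B$.

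The main obstacle is the divisions case of the truth lemma. Unfolding the definition of $v(A \BS B)$ via the induction hypothesis on $A$ and $B$ reduces $(\alpha, \alpha\Sigma) \in v(A \BS B)$ to the condition that for every non-empty decomposition $\alpha = \beta\Phi$ with $\Phi \vdash A$, one has $\Phi, \Sigma \vdash B$. The implication from $\Sigma \vdash A \BS B$ goes through at once by $\BS L$ and $Cut$. The converse requires the specific instance $\Phi = A$, which forces $\alpha$ to end in $A$---yet $\alpha$ is arbitrary. The technical core of the Andr\'eka--Mikul\'as argument is precisely a construction of $W$ and $U$ that makes such suffixes always available: concretely, one refines the canonical model so that every formula occurs as a suffix at every relevant position (or, equivalently, formulates the lemma with a universal quantifier over $\alpha \in W$ that can be instantiated to an $\alpha$ of the form $\gamma, A$), ensuring that the universal `$\forall x \in W$' in the R-model clause for $\BSU$ truly corresponds to the syntactic `for all $\Phi$ with $\Phi \vdash A$'. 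Symmetric reasoning handles $\SL$. Once this refinement is in place, the rest is routine bookkeeping.
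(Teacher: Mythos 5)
There is a genuine gap, in fact two. Your model is the naive canonical one: $W$ is the set of finite sequences of formulae, $U$ is the suffix-extension relation, and the truth lemma reads $(\alpha,\alpha\Sigma)\in v(A) \iff \Hc\vdash \Sigma\yields A$. First, the product case does \emph{not} ``follow from the R-model clauses together with $\cdot R,\cdot L$'' as you claim. For the inclusion $v(A\cdot B)\supseteq\{(\alpha,\alpha\Sigma): \Hc\vdash\Sigma\yields A\cdot B\}$ you must exhibit an intermediate point, i.e.\ split the \emph{sequence} $\Sigma$ into non-empty $\Sigma_1,\Sigma_2$ with $\Sigma_1\vdash A$ and $\Sigma_2\vdash B$. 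This is impossible in general: take $\Sigma$ to be the one-element sequence $A\cdot B$, or a variable $p$ with $\Lambda\yields$-free hypothesis $p\yields A\cdot B$ in $\Hc$. No decomposition of the syntactic sequence exists, so the composition $v(A)\circ v(B)$ misses the pair and the truth lemma already breaks at $\cdot$. Second, for divisions you correctly diagnose the failure (the semantic condition is vacuous when $\alpha$ has no suffix deriving $A$, and even when it is not vacuous you need the literal suffix $\Phi=A$ to apply $\BS R$), but you do not repair it; you only assert that ``one refines the canonical model so that every formula occurs as a suffix at every relevant position.'' A fixed finite sequence cannot end in every formula simultaneously, so no refinement of \emph{this} $W$ and $U$ achieves that. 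The sentence in which you defer to ``the technical core of the Andr\'eka--Mikul\'as argument'' is precisely the entire content of the theorem; without it the proposal proves nothing.

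The construction the paper actually uses (Lemma~\ref{Lm:graph}, adapted from Andr\'eka and Mikul\'as, and reused in irreflexive form for $\LLw$) abandons sequences altogether. One builds, by an infinite scheduled step-by-step process, a labelled graph whose \emph{edges} carry single formulae, maintaining transitivity, antisymmetry and $\vdash\ell(x,z)\yields\ell(x,y)\cdot\ell(y,z)$ at every stage, while three kinds of transitions achieve in the limit exactly the two witnesses you are missing: type~0/1 transitions add, for each vertex $y$ and each formula $A$, a \emph{fresh} predecessor $x$ with $\ell(x,z)=A\cdot\ell(y,z)$ for all successors $z$ of $y$ (this is what makes the universal quantifier in the clause for $\BSU$ match ``$A,\Sigma\vdash B$''), and type~2 transitions insert a fresh midpoint $y$ between $x$ and $z$ whenever $\vdash\ell(x,z)\yields B\cdot C$ (this is what saves the product case). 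Your proposal contains neither mechanism, so both directions of the truth lemma that actually require work are unproved.
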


\begin{thm}[Andr\'eka, Mikul\'as 1994]\label{Th:AndrekaMikulasLE}
The calculus $\LE$ is strongly complete w.r.t.\ the class of square R-models.
\end{thm}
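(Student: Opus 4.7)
The plan is to prove strong completeness by contraposition: given $\Hc \not\vdash_\LE \Pi \yields B$, I will construct a square R-model in which every sequent of $\Hc$ is true but $\Pi \yields B$ is false. Along the way I need soundness of $\LE$ w.r.t.\ square R-models, which is a routine rule-by-rule verification; the key point is that in a square R-model the diagonal $\delta \subseteq W \times W$ acts as a unit for relational composition, which is what makes the $\BS R$ and $\SL R$ rules with empty premises sound.

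For the canonical model I would take the carrier $W = \Fm^*$, the free monoid of finite sequences of formulas (including the empty sequence $\Lambda$), and $U = W \times W$. The valuation is designed to capture ``right-appending by a sequence derivably of type $A$'', namely
\[
v(A) = \{(\Phi, \Psi) \in W \times W : \Psi = \Phi, \Gamma \text{ for some } \Gamma \in \Fm^* \text{ with } \Hc \vdash_\LE \Gamma \yields A\}.
\]
The central technical step is a truth lemma: for every sequent $\Pi \yields B$, it holds that $\Pi \yields B$ is true in this model if and only if $\Hc \vdash_\LE \Pi \yields B$. The $\Leftarrow$ direction is soundness inside the canonical model, and every sequent of $\Hc$ is automatically true by construction. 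For the $\Rightarrow$ direction, one observes that $(\Lambda, \Pi)$ always belongs to $v(A_1) \circ \cdots \circ v(A_n)$ (built up from the identity axioms $A_i \yields A_i$), so truth of $\Pi \yields B$ forces $(\Lambda, \Pi) \in v(B)$, which by the definition of $v$ gives $\Hc \vdash_\LE \Pi \yields B$; the empty antecedent case is analogous using $\delta \subseteq v(B)$.

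The main obstacle is to verify that the above valuation respects the recursive semantic clauses $v(A \cdot B) = v(A) \circ v(B)$, $v(A \BS B) = v(A) \BSU v(B)$, and $v(B \SL A) = v(B) \SLU v(A)$. The inclusion $v(A) \circ v(B) \subseteq v(A \cdot B)$ follows immediately from $\cdot R$, but the reverse inclusion is subtle: a derivation of $\Gamma \yields A \cdot B$ need not factor as $\Gamma_1 \yields A$ and $\Gamma_2 \yields B$ along a split of $\Gamma$. Dually, the inclusion $v(A) \BSU v(B) \subseteq v(A \BS B)$ requires converting a universally quantified semantic statement about pairs $(\Phi', \Phi) \in v(A)$ into a single derivation ending with $\BS R$. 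Resolving these issues will require a careful combinatorial refinement of the universe or the valuation (so that every semantic witness corresponds to a concrete derivation, for instance by adjoining ``decomposition points'' for each product-formula or by enlarging $v(A)$ to include pairs induced by all derivable representatives of $A$); I expect this matching between the relational quantifiers and the sequent rules to be where the bulk of the technical work lies.
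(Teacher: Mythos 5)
Your proposal has a genuine gap, and it sits exactly where you flag it: the canonical model on $W = \Fm^*$ with $v(A) = \{(\Phi,\Phi\Gamma) : \Hc \vdash \Gamma \yields A\}$ does not satisfy the semantic clauses, and the needed repair is not a ``refinement'' of that model but a different construction altogether. Concretely: (a) $v(A \cdot B) \subseteq v(A) \circ v(B)$ fails because a derivation of $\Gamma \yields A \cdot B$ need not factor through a split of $\Gamma$ (already $p \cdot q \yields p \cdot q$ gives a pair in $v(p\cdot q)$ with no splitting into $\Gamma_1 \yields p$, $\Gamma_2 \yields q$), and with nonempty $\Hc$ you cannot even appeal to cut elimination to analyse derivations. (b) The division clauses break irreparably at the empty sequence: $(\Phi',\Lambda) \in v(p)$ requires $\Phi' = \Lambda$ and $\Lambda \yields p$, which never holds for a variable $p$, so $(\Lambda,\Psi) \in v(p) \BSU v(q)$ vacuously for \emph{every} $\Psi$, while $v(p \BS q)$ contains only pairs witnessed by actual derivations. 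The free monoid under right-appending simply has no room to prepend $A$-witnesses in front of a point, so the universal quantifier in the $\BS$ clause ranges over too few arrows and the inclusion $v(A) \BSU v(B) \subseteq v(A \BS B)$ is false, not merely hard to prove.

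The actual proof (Andr\'eka--Mikul\'as; the same machinery appears in this paper as Lemma~\ref{Lm:graph} and the proof of Theorem~\ref{Th:weak}) builds the carrier by a countable step-by-step construction rather than taking it all at once: one grows a labelled graph, adding at each stage a \emph{fresh} point that serves as a witness either for a product decomposition ($\vdash \ell(x,z) \yields B \cdot C$ forces an intermediate $y$ with $\vdash \ell(x,y) \yields B$ and $\vdash \ell(y,z) \yields C$) or for a division (for every point $y$ and formula $A$, a new predecessor $x$ with $\ell(x,z) = A \cdot \ell(y,z)$ for all successors $z$ of $y$), while maintaining the invariant $\vdash \ell(x,z) \yields \ell(x,y)\cdot\ell(y,z)$ along composable edges; a schedule function ensures every requirement is eventually met, and the truth lemma then reads off derivability from the labels. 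This is precisely the mechanism that supplies the ``decomposition points'' and the $A$-predecessors your model lacks, and verifying that each extension preserves the invariant is where the real work is. Note also that your target of full \emph{strong} completeness is delicate here: the paper shows (Theorem~\ref{Th:counterexample}) that once intersection is added, the product-decomposition step cannot be carried out in the presence of hypotheses, so any correct argument must exploit the absence of $\wedge$ in $\LE$; a proof strategy that does not see where $\wedge$ would break it is a warning sign.
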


The arguments used for proving these two theorems, being similar, are yet not completely identical. In particular,  Theorem~\ref{Th:AndrekaMikulas} is valid for the calculus with intersection, while the proof of Theorem~\ref{Th:AndrekaMikulasLE} cannot be easily extended to such a calculus, i.e., to $\LEw$.

Later on, however, Mikul\'as~\cite{Mikulas2015Synthese,Mikulas2015Studia} managed to modify the proof of Theorem~\ref{Th:AndrekaMikulasLE} for $\LEw$, but this modification establishes only weak completeness:
\begin{thm}[Mikul\'as 2015]\!\footnote{Here ``Mikul\'{a}s 2015'' refers both to~\cite{Mikulas2015Synthese} and~\cite{Mikulas2015Studia}, which feature different proofs of Theorem~\ref{Th:Mikulas}.}
\label{Th:Mikulas}
 If a sequent (in the language of $\cdot,\BS,\SL,\wedge$) is true in all square R-models, then it is derivable in $\LEw$.
\end{thm}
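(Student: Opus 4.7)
The plan is to prove the contrapositive: given a sequent $\Pi \to B$ not derivable in $\LEw$, exhibit a square R-model that refutes it. The natural canonical-model construction takes $W$ to be the set of finite sequences of formulas, modulo inter-derivability ($\Gamma \equiv \Gamma'$ iff $\Gamma \to C$ and $\Gamma' \to C$ are derivable in $\LEw$ for exactly the same $C$), and defines
\[
v(A) = \{(\Gamma_1,\Gamma_2)\in W\times W : \Gamma_2 = \Gamma_1\Sigma \text{ for some sequence } \Sigma \text{ with } \Sigma \to A\},
\]
where concatenation and equality of sequences are understood modulo $\equiv$. The intended refuting arrow is the pair $(\Lambda, A_1\ldots A_n)\in W\times W$, which lies in $v(A_1)\circ\cdots\circ v(A_n)$ by the identity derivations $A_i\to A_i$ but not in $v(B)$ by the non-derivability assumption.

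The main technical work is verifying the clauses of Definition~\ref{Df:Rmod}. The $\wedge$-clause follows from $\wedge R$ together with the derivability of $A\wedge B\to A$ and $A\wedge B\to B$. The multiplication clause $v(A\cdot B) = v(A)\circ v(B)$ uses $\cdot R$ for $\supseteq$, and for $\subseteq$ cut elimination plus an analysis of the last rule of a cut-free derivation of $\Sigma\to A\cdot B$: $\cdot R$ yields the split $\Sigma=\Sigma_1\Sigma_2$ directly, $\cdot L$ is absorbed by the $\equiv$-quotient (since $\Gamma, C, D, \Delta \equiv \Gamma, C\cdot D, \Delta$), and $\wedge L_i$ is handled by applying the rule to the inductive split. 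The division clause $v(A\BS B) = v(A)\BSU v(B)$ uses $\BS R$ for $\subseteq$.

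The main obstacle, and the reason Andr\'eka--Mikul\'as's Theorem~\ref{Th:AndrekaMikulasLE} does not directly extend to $\LEw$, is the $\supseteq$ direction of the division clause in a square R-model. One needs, for every $(y,z)\notin v(A\BS B)$, a source witness $x$ with $(x,y)\in v(A)$ and $(x,z)\notin v(B)$. With the naive choice of $W$ above, many pairs $(y,z)$ admit no such $x$ at all (no source satisfies $(x,y)\in v(A)$, as $y$ may have no suffix deriving $A$ up to $\equiv$), in which case $(y,z)\in v(A)\BSU v(B)$ vacuously, spoiling the clause. In a relativised model this issue is neutralised by restricting $U$; in a square model one must enrich $W$ or the valuation so that every formula has enough incoming canonical arrows. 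Intersection compounds the difficulty, since the $\equiv$-quotient absorption used for $\cdot L$ has no direct analogue for $\wedge L_i$ (the singleton $[A\wedge B]$ is in general not inter-derivable with either $[A]$ or $[B]$). Mikul\'as's 2015 proof refines the canonical construction so that all clauses hold simultaneously; this careful bookkeeping is where I expect the bulk of the technical effort to lie, and once it is completed, refutation of $\Pi\to B$ follows from the argument indicated in the first paragraph.
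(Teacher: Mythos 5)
Your proposal correctly isolates the crux --- in a square model the $\supseteq$ direction of the division clauses needs, for every pair $(y,z)\notin v(A\BS B)$, a source vertex $x$ with $(x,y)\in v(A)$ and $(x,z)\notin v(B)$ --- but it does not actually supply such witnesses; it ends by deferring ``the bulk of the technical effort'' to Mikul\'as's refinement of the canonical construction. That deferred step \emph{is} the proof, so as it stands the argument has a genuine gap. Moreover, the specific model you propose (sequences of formulae modulo inter-derivability, with $v(A)$ given by suffix extension) fails on its own terms: as you note yourself, any pair $(y,z)$ where $y$ admits no incoming arrow in $v(A)$ lands in $v(A)\BSU v(B)$ vacuously, so the clause $v(A\BS B)=v(A)\BSU v(B)$ is simply false for this carrier, and no amount of bookkeeping over the $\equiv$-quotient fixes that without changing $W$ itself.

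The paper closes exactly this gap by a different route. It first proves weak completeness of the \emph{extension} $\LEwuz$ with explicit constants (Theorem~\ref{Th:weak}), via a step-by-step construction of a labelled graph (Lemma~\ref{Lm:graph}) in which each edge carries a single formula as its label and every loop is labelled $\One$. Properties~\ref{It:Aprev} and~\ref{It:Asucc} of that graph are precisely the missing witness conditions: for every vertex $y$ and every formula $A$, a fresh source $x$ is adjoined with $\ell(x,z)=A\cdot\ell(y,z)$ for all successors $z$ of $y$ (and dually a fresh sink), which is what makes the $\supseteq$ direction of the division clauses go through in a square model. The single-formula labelling, made possible by putting $\One$ on loops in place of Mikul\'as's filters, also renders the intersection clause immediate rather than a complication. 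Theorem~\ref{Th:Mikulas} is then obtained by taking the constant-free reduct of the resulting universal model (Remark~\ref{Rem:filter}). To complete your argument you would need to replace the quotient-of-sequences carrier by such an iteratively built graph (or an equivalent witness-saturation device) and verify all closure properties yourself; citing Mikul\'as's proof for that step leaves the theorem unproved.
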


In Section~\ref{S:counterexample} we shall show that strong completeness for $\LEw$ fails.

\section{Constants Zero and One: Nonstandard Models}\label{S:constants}

The situation with constants~$\Zero$ and~$\One$ is tricky. Naturally, $\One$ should be interpreted as the neutral element for multiplication (since $A \cdot \One \leftrightarrow A \leftrightarrow \One \cdot A$) and $\Zero$ as the smallest element in the preorder (due to derivability of $\Zero \to A$). This suggests the following interpretation of constants in R-models, which we call the {\em standard interpretation:}
\begin{align*}
 & v(\Zero) = \varnothing; \\
 & v(\One) = \delta = \{ (x,x) \mid x \in W \}.
\end{align*}

Unfortunately, under the standard interpretation completeness fails, even in the weak sense, for each of the constants. Let us start with constant~$\Zero$.

\begin{prop}
 The sequent 
 \[
  \Zero \SL (\Zero \SL p), \Zero \SL (\Zero \SL q) \yields (\Zero \SL (\Zero \SL q)) \cdot (\Zero \SL (\Zero \SL p))
 \]
is true in all square R-models (under the standard interpretation of $\Zero$), but not derivable in $\LEwuz$.
\end{prop}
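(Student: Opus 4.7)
\emph{Validity.} My first step would be to compute, in a square R-model $\Md = (W, W\times W, v)$ with $v(\Zero) = \varnothing$, that the iterated expression $v(\Zero \SL (\Zero \SL X))$ collapses to a trivial relation for every formula $X$. Unfolding the semantic clause once gives $(x,y) \in v(\Zero \SL X)$ iff $y$ has no outgoing $v(X)$-edge, a condition depending only on $y$. Iterating, $(x,y) \in v(\Zero \SL (\Zero \SL X))$ iff every $z \in W$ has some outgoing $v(X)$-edge -- a condition independent of both $x$ and $y$. Hence $v(\Zero \SL (\Zero \SL X)) \in \{\varnothing, W \times W\}$. Specialising to $X \in \{\Zero \SL p,\, \Zero \SL q\}$ and performing a case split (using $W \neq \varnothing$, so $(W \times W) \circ (W \times W) = W \times W$), the sequent is immediate: if either factor on the left is empty the LHS is $\varnothing$, and if both factors equal $W \times W$, so do both sides of the sequent.

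\emph{Non-derivability.} For the second half I would invoke cut-elimination for $\LEwuz$ (an easy adaptation of Lambek's classical argument, with the new rules for $\wedge$, $\Zero$, $\One$ each enjoying the usual permutation lemmas) and the resulting subformula property, restricting attention to cut-free derivations whose formulas are all subformulas of the given sequent. The last rule of any such derivation must be either $\cdot R$ (since the succedent is a product) or $\SL L$ applied to one of the two antecedent formulas -- no other rule's conclusion matches. I would then enumerate the three splits available to $\cdot R$ and the subcases for $\SL L$, recursively tracing each branch. Every branch eventually forces a leaf of the form $\Lambda \to p$, $\Lambda \to q$, $p \to q$, or $q \to p$, none of which is cut-free derivable; hence the original sequent has no cut-free derivation.

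\emph{Main obstacle.} The bookkeeping in the non-derivability case analysis is the main nuisance: it is finite but branchy. To keep it tractable I would first factor out two auxiliary claims. (i) For every subformula $X$ of the given sequent, the sequent $\Lambda \to X$ is not cut-free derivable, proved by induction on $X$ with the atomic cases $X \in \{p, q\}$ handled by inspection and $X$ of the form $\Zero \SL Y$ reduced to the previous level via $\SL R$ and $\SL L$. (ii) The ``swap'' sequent $\Zero \SL (\Zero \SL p) \to \Zero \SL (\Zero \SL q)$ (and its mirror) is not cut-free derivable; this encodes the key intuition that the variables $p$ and $q$ are never linked by any sequence of subformula decompositions. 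Armed with (i) and (ii), each branch of the main case analysis closes at a single step.
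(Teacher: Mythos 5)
Your proposal is correct and follows essentially the same route as the paper: the validity half rests on the same observation that $v(\Zero \SL (\Zero \SL X))$ is always either $\varnothing$ or $W \times W$ (so the two factors commute under composition), and the non-derivability half is the same cut-free proof search, with your auxiliary claims (i) and (ii) being exactly the facts to which the paper's search reduces. One cosmetic point: some branches die at sequents such as $p, q \yields \Zero$, to which no rule applies at all, rather than at one of the four leaf forms you list --- but this does not affect the argument.
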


\begin{proof}
 Let us first check that sequent is generally true on R-models. For an arbitrary relation $R$ we have:
 \begin{multline*}
 \varnothing \SL R = \{ (x,y) \in W \times W \mid \forall z \in W \: \bigl( (y,z)  \in R \Rightarrow (x,z) \in \varnothing \bigr) \} =\\
 \{ (x,y) \in W \times W \mid \forall z \in W \: (y,z) \notin R \}.
 \end{multline*}
Further,
\begin{multline*}
\varnothing \SL (\varnothing \SL R) = \{
(x,y) \in W \times W \mid \forall z \in W \: (y,z) \notin \varnothing \SL R \} = \\
\{ (x,y) \in W \times W \mid \forall z \in W \: \exists w \in W \: (z,w) \in R \}.
\end{multline*}
The condition on $(x,y)$ here does not depend on $(x,y)$ itself. Hence,  $\varnothing \SL (\varnothing \SL R)$ is either $\varnothing$ or $W \times W$, for any relation $R$. For these two relations, composition is commutative:
\[
 \varnothing \circ (W \times W) = \varnothing = (W \times W) \circ \varnothing.
\]
This yields $(\varnothing \SL (\varnothing \SL P)) \circ (\varnothing \SL (\varnothing \SL Q)) \subseteq (\varnothing \SL (\varnothing \SL Q)) \circ (\varnothing \SL (\varnothing \SL P))$ for any $P,Q$. Taking $P = v(p)$, $Q = v(q)$, and recalling that $v(\Zero) = \varnothing$ (standard interpretation), we conclude that the desired sequent is true.

Now let us perform cut-free proof search in order to show that this sequent is not derivable in $\LEwuz$. Two possible cases for the lowermost rule are $\SL L$ and $\cdot R$. In the first case, the left premise of this rule should be $\Zero \SL (\Zero \SL q) \yields \Zero \SL p$ (since ${} \yields \Zero \SL p$ and ${} \yields \Zero \SL q$ are obviously not derivable). The only non-trivial attempt to derive this sequent is as follows:
\[
 \infer
 {\Zero \SL (\Zero \SL q) \yields \Zero \SL p}
 {\infer{\Zero \SL (\Zero \SL q), p \yields \Zero}{\infer{p \yields \Zero \SL q}{p,q \yields \Zero} & \Zero \yields \Zero}}
\]
and it fails.

The second possibility is $\cdot R$. One can easily see that ${\yields \Zero \SL (\Zero \SL q)}$ is not derivable, neither is ${\yields \Zero \SL (\Zero \SL p)}$. Hence, the only remaining attempt is as follows:
\[
 \infer
 {\Zero \SL (\Zero \SL p), \Zero \SL (\Zero \SL q) \yields (\Zero \SL (\Zero \SL q)) \cdot (\Zero \SL (\Zero \SL p))}
 {\infer{\Zero \SL (\Zero \SL p) \yields \Zero \SL (\Zero \SL q)}{\infer{\Zero \SL (\Zero \SL p), \Zero \SL q \yields \Zero}
 {\infer{\Zero \SL q \yields \Zero \SL p}{\Zero \SL q, p \yields \Zero} & \Zero \yields \Zero}} & \Zero \SL (\Zero \SL q) \yields \Zero \SL (\Zero \SL p)}
\]
The top sequent $\Zero \SL q, p \yields \Zero$ is not derivable.
\end{proof}

Next, we recall two previously known counterexamples to weak completeness with constant~$\One$, in square R-models with standard interpretation of this constant.

One example, 
\[
 \One \wedge p \wedge q \yields (\One \wedge p) \cdot (\One \wedge q),
\]
was given by Andr\'eka and Mikul\'as~\cite{AndrekaMikulas2011}. In the particular case of $p = q$, this yields the contraction (``doubling'') principle for formulae of the form $\One \wedge A$, that is, $\One \wedge A \to (\One \wedge A) \cdot (\One \wedge A)$. Another such example was given by Buszkowski~\cite{BuszkoRelMiCS}: 
\[
 \One \SL (p \SL p) \yields (\One \SL (p \SL p)) \cdot (\One \SL (p \SL p)).
\]
This
example uses division instead of intersection, and again it is a form of contraction.

Thus, constructing a complete axiomatisation for the standard interpretation of constants $\Zero$ and $\One$ in R-models, even if this is possible, is a non-trivial open question.

We overcome this issue by extending the class of models being considered,
thus restoring completeness for the original system $\LEwuz$. The idea is as follows:
while $\delta$ is the only neutral element for the set of {\em all} binary relations on $W$, for
a set which includes only some relations (and does not include $\delta$), the neutral
element could be a different relation. Similarly for zero, a designated subclass of binary relations could have the smallest element which is not $\varnothing$. This leads to the following definition.

\begin{defi}\label{Df:Aunit}
 Let $\Af \subseteq \Pc(W \times W)$ be a family of binary relations over $W$, closed under $\circ$, $\BS$, $\SL$, and $\cap$. Relation $\UA \in \Af$ is called the $\Af$-unit if $\UA \circ R = R \circ \UA = R$ for any $R \in \Af$. Relation $\ZA \in \Af$ is called the $\Af$-zero if $\ZA \subseteq R$ for any $R \in \Af$.
\end{defi}

A standard algebraic argument shows that the $\Af$-unit, if it exists, is unique. Indeed, for another $\Af$-unit $\UA' \in \Af$ we have $\UA' = \UA' \circ \UA = \UA$. For the $\Af$-zero, uniqueness is obvious.

However, $\UA$ is not necessarily the diagonal relation $\delta = \{ (x,x) \mid x \in W \}$, as the latter may be outside $\Af$. For example, let $W$ be a non-empty set and let $W' = W \times \{ 1,2 \}$. For each relation $R$ on $W$ let us define a relation $R'$ as follows: $(x,i) R' (y,j)$, if $xRy$ and $i \le j$. Let $\Af$ be the class of relations of the form $R'$. Then $\UA = \delta' = \{ ((x,i),(x,j)) \mid i \le j \}$. Similarly, $\ZA$ is not necessarily $\varnothing$; in fact, it is just the intersection of all relations from $\Af$.

Let us prove two properties of $\ZA$ and $\UA$.

\begin{lem}\label{Lm:zerozero}
 If $\ZA$ is the $\Af$-zero, then it is the zero for relation composition, i.e., $\ZA \circ R = R \circ \ZA = \ZA$ for any $R \in \Af$.
\end{lem}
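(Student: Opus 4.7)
The plan is to prove $\ZA \circ R = R \circ \ZA = \ZA$ by splitting each equality into two inclusions and exploiting only the two defining properties of the $\Af$-zero together with the residuation of $\BS$ and $\SL$ against $\circ$.

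First I would dispatch the ``easy'' inclusions $\ZA \subseteq R \circ \ZA$ and $\ZA \subseteq \ZA \circ R$. Since $\Af$ is closed under composition, both $R \circ \ZA$ and $\ZA \circ R$ belong to $\Af$, and then the defining property of the $\Af$-zero---that it is contained in every element of $\Af$---gives both inclusions immediately.

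For the reverse inclusions I would use residuation. Because $\Af$ is closed under $\BS$, the relation $R \BS \ZA$ lies in $\Af$, so the $\Af$-zero property yields $\ZA \subseteq R \BS \ZA$. I would then invoke the general residuation equivalence $T \subseteq R \BS S \iff R \circ T \subseteq S$, which is a direct quantifier manipulation from the definition of composition together with Definition~\ref{Df:Rmod} and which remains valid in the relativised setting where $\BS$ means $\BSU$ (since $R \circ T$ lands in $U$ by transitivity of $U$). Applied with $T = \ZA$ and $S = \ZA$, it converts $\ZA \subseteq R \BS \ZA$ into $R \circ \ZA \subseteq \ZA$. The symmetric manoeuvre, using that $\ZA \SL R \in \Af$ so $\ZA \subseteq \ZA \SL R$, yields $\ZA \circ R \subseteq \ZA$.

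There is no genuine obstacle here. The only observation worth flagging is that the minimality property of $\ZA$ must be chained with the closure of $\Af$ under \emph{divisions} (not just composition) to produce the non-trivial upper bound $\ZA \subseteq R \BS \ZA$, which is then precisely what residuation asks for. Once this is seen, the whole argument is a three-line algebraic calculation that treats the $\circ$-zero property as a direct consequence of lattice-theoretic minimality in a residuated structure.
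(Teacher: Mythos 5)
Your proof is correct and takes essentially the same approach as the paper's: the easy inclusions follow from closure of $\Af$ under composition plus minimality of $\ZA$, and the hard inclusions come from $\ZA \subseteq \ZA \SL R$ (resp.\ $\ZA \subseteq R \BS \ZA$) converted by residuation into $\ZA \circ R \subseteq \ZA$ (resp.\ $R \circ \ZA \subseteq \ZA$). The only difference is cosmetic: the paper proves one equality and declares the other symmetric, whereas you spell out both sides explicitly.
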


\begin{proof}
 Let us show $\ZA \circ R = \ZA$, the other equality is symmetric. First, since $\Af$ is closed under composition, we have $\ZA \circ R \in \Af$, and therefore $\ZA \subseteq \ZA \circ R$. For the opposite inclusion, we notice that $\ZA \SL R$ is also in $\Af$, whence $\ZA \subseteq \ZA \SL R$, which is equivalent to $\ZA \circ R \subseteq \ZA$.
\end{proof}

\begin{rem}
 Lemma~\ref{Lm:zerozero} is actually a general algebraic fact that, in residuated partially ordered structures, the least object of the partial order should always be the zero for the multiplication operation. Division operations are crucial here.
\end{rem}

\begin{lem}\label{Lm:deltaunit}
 If $R \in \Af$ and $\UA$ is the $\Af$-unit, then $\UA \subseteq R$ if and only if $\delta \subseteq R$. In particular, $\delta \subseteq \UA$. 
\end{lem}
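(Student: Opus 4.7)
My plan is to first establish the ``in particular'' claim $\delta \subseteq \UA$, after which the biconditional falls out by two short arguments. The key observation is that closure of $\Af$ under $\BS$ forces the element $\UA \BS \UA$ to sit inside $\UA$, while $\delta$ always sits inside $R \BS R$ tautologically.

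For $\delta \subseteq \UA$, I would argue as follows. Since $\Af$ is closed under $\BS$, the relation $\UA \BS \UA$ lies in $\Af$, so the unit property applied with $R = \UA \BS \UA$ gives $\UA \circ (\UA \BS \UA) = \UA \BS \UA$. On the other hand, the standard residuation inclusion $R \circ (R \BS S) \subseteq S$, immediate from the definition of $\BS$, yields $\UA \circ (\UA \BS \UA) \subseteq \UA$. Combining these, $\UA \BS \UA \subseteq \UA$. At the same time, $(x,x) \in \UA \BS \UA$ holds tautologically, since the defining condition $\forall u\,((u,x) \in \UA \Rightarrow (u,x) \in \UA)$ is trivial. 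Thus $\delta \subseteq \UA \BS \UA \subseteq \UA$.

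For the biconditional, the direction $\UA \subseteq R \Rightarrow \delta \subseteq R$ is then immediate by transitivity using $\delta \subseteq \UA$. Conversely, assume $\delta \subseteq R$ and take any $(x,y) \in \UA$; since $(y,y) \in \delta \subseteq R$, the pair $(x,y)$ lies in $\UA \circ R$, which equals $R$ by the unit property, so $(x,y) \in R$.

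The only genuinely nontrivial step is the extraction of $\delta \subseteq \UA$, because the definition of $\Af$-unit is purely algebraic and does not a priori pin down the concrete shape of $\UA$ (indeed the examples following the definition show $\UA$ strictly larger than $\delta$). The crucial device is pitting the unit property against the residuation inclusion for a well-chosen element of $\Af$, namely $\UA \BS \UA$; by left-right symmetry, $\UA \SL \UA$ would work equally well.
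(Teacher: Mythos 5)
Your proof is correct, and it rests on the same underlying trick as the paper's — playing the unit law off against the tautological residuation inclusion on a residual that closure of $\Af$ guarantees to exist — but the decomposition is genuinely different. The paper's proof first establishes the identity $R = R \SL \UA$ for \emph{every} $R \in \Af$ (via $(R \SL \UA) \circ \UA \subseteq R$ together with $(R\SL\UA)\circ\UA = R\SL\UA$), and then obtains the whole biconditional at once from the chain $\UA \subseteq R \iff \delta \circ \UA \subseteq R \iff \delta \subseteq R \SL \UA \iff \delta \subseteq R$, with $\delta \subseteq \UA$ falling out as the instance $R = \UA$. You instead aim directly at $\delta \subseteq \UA$ using the single well-chosen element $\UA \BS \UA$, and then dispatch the two directions of the biconditional by elementary pointwise arguments (transitivity of $\subseteq$ one way, and $(x,y)\in\UA$, $(y,y)\in R$ implies $(x,y)\in\UA\circ R = R$ the other way). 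Your route is arguably more self-contained and avoids the residuation adjunction entirely in the second half; what the paper's route buys is the reusable identity $R = R \SL \UA$, which is cited again later (in the discussion of iterative divisions, where $R \SL \delta = R = R \SL \UA$ is needed to justify keeping $\delta$ in the interpretation of the Kleene star). If you adopted your version, that later reference would need its own short argument.
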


\begin{proof}
 Let us first show that for any $R \in \Af$ we have $R = R \SL \UA$. One inclusion is easy: $R \circ \UA \subseteq R$ yields $R \subseteq R \SL \UA$. For the other inclusion, we first notice that $(R \SL \UA) \circ \UA \subseteq R$ (this follows from $R \SL \UA \subseteq R \SL \UA$). Now, since $(R \SL \UA) \in \Af$, we have $R \SL \UA = (R \SL \UA) \circ \UA \subseteq R$.
 Using $R = R \SL \UA$, we build a chain of equivalences:
 $\UA \subseteq R \iff \delta \circ \UA \subseteq R \iff \delta \subseteq R \SL \UA \iff \delta \subseteq R.
 $
\end{proof}

Due to this lemma, we may keep the truth definition for sequents with empty antecedents the same. That is, we do not need to replace $\delta$ with $\UA$.

\begin{rem}\label{Rem:outside}
 It is important to note that computations in the proof of Lemma~\ref{Lm:deltaunit} go {\em beyond} the family $\Af$, since it is possible that $\delta \notin \Af$. This does not cause any problems, because the introduction of $\Af$ does not change the definitions of operations, in particular, divisions. They are still computed on the whole $\Pc(W \times W)$, as in a standard square R-model.
\end{rem}

\begin{rem}
 One of the reviewers noticed that $\Af$-units may be viewed as {\em ``super-identities,''} as opposed to more well-known {\em subidentities,} which appear, e.g., in the theory of domain semirings~\cite{DesharnaisStruth}. The behaviour of such ``super-identities'' in  more general algebraic settings might be a topic for further study.
\end{rem}

Now we are ready to define our non-standard models.

\begin{defi}\label{Df:RmodNS}
 A non-standard square R-model is a structure $\Md^{\Af} = (W, \Af, \UA, \ZA, v)$, where $W$ is a non-empty set; $\Af \subseteq \Pc(W \times W)$ is a family of binary relations on $W$, closed under $\circ$, $\BS$, $\SL$, and $\cap$; $\UA$ is the $\Af$-unit; $\ZA$ is the $\Af$-zero; $v \colon \Fm \to \Af$ is a valuation function mapping formulae to relations from the family $\Af$. The valuation function should obey the conditions from Definition~\ref{Df:Rmod}, with $U = W \times W$, and, additionally, $v(\One) = \UA$ and $v(\Zero) = \ZA$. The truth of a sequent in a non-standard square R-model is defined exactly as in Definition~\ref{Df:truth}.
\end{defi}

\begin{prop}\label{Prop:sound}
The calculus $\LEwuz$ is strongly sound w.r.t.\ the class of non-standard square R-models.
\end{prop}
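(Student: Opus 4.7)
The plan is a routine induction on the structure of a derivation of $\Pi \to B$ from $\Hc$ in $\LEwuz$. Given a non-standard square R-model $\Md^{\Af} = (W, \Af, \UA, \ZA, v)$ in which every sequent of $\Hc$ is true, I will check that each axiom is true and each inference rule preserves truth. For the base case, the $Id$ axiom and sequents from $\Hc$ are immediate. The purely logical rules $\BS L$, $\BS R$, $\SL L$, $\SL R$, $\cdot L$, $\cdot R$, $\wedge L_1$, $\wedge L_2$, $\wedge R$, and $Cut$ are handled as in the standard residuated-semigroup soundness argument: each of them corresponds directly to a valid inequality under composition, residuation, and intersection, which hold on $\Af$ because $\Af$ is closed under $\circ$, $\BS$, $\SL$, and $\cap$, and the valuation ranges into $\Af$.

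The new content concerns the three rules involving the constants $\Zero$ and $\One$. For $\Zero L$, given any antecedent $\Gamma, \Zero, \Delta$, I use that $v(\Zero) = \ZA$ and apply Lemma~\ref{Lm:zerozero} to collapse $v(\Gamma) \circ \ZA \circ v(\Delta)$ down to $\ZA$ (using also that, by closure, the composition of the interpretations of formulae in $\Gamma$ and $\Delta$ lies in $\Af$). Since $\ZA$ is the $\Af$-zero and $v(C) \in \Af$, we get $\ZA \subseteq v(C)$, so the conclusion holds; the degenerate cases where $\Gamma$ or $\Delta$ is empty are covered in the same way by Definition~\ref{Df:truth} and Lemma~\ref{Lm:zerozero}.

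For $\One L$, the conclusion $\Gamma, \One, \Delta \to C$ requires $v(\Gamma) \circ \UA \circ v(\Delta) \subseteq v(C)$. When both $\Gamma$ and $\Delta$ are non-empty, their compositions lie in $\Af$, so $\UA$ acts as a unit and the expression equals $v(\Gamma) \circ v(\Delta)$, which is $\subseteq v(C)$ by the premise. If $\Gamma = \Delta = \Lambda$, the premise reads $\delta \subseteq v(C)$, and by Lemma~\ref{Lm:deltaunit} this is equivalent to $\UA \subseteq v(C)$, which is precisely the truth of $\One \to C$. The remaining mixed cases (one side empty) are similar. For $\One R$, we need $\delta \subseteq v(\One) = \UA$, which is the second conclusion of Lemma~\ref{Lm:deltaunit}.

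No step is really an obstacle: the substantive work has already been done in Lemmas~\ref{Lm:zerozero} and~\ref{Lm:deltaunit}, which were tailored exactly to make the $\Zero$-axiom and the $\One$-rules sound under the non-standard interpretation, and to reconcile the use of $\delta$ in the truth definition for empty antecedents with the non-standard unit $\UA$. The only point deserving care is the bookkeeping for empty-antecedent edge cases in the $\One L$ rule, where one must match the $\delta$-based truth clause of Definition~\ref{Df:truth} against the $\UA$-based one via Lemma~\ref{Lm:deltaunit}.
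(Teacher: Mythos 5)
Your proposal is correct and follows essentially the same route as the paper: induction on the derivation, with the standard rules handled by the usual residuated-semigroup argument, $\One R$ and the empty-antecedent case of $\One L$ reduced to Lemma~\ref{Lm:deltaunit}, the non-empty case of $\One L$ handled by absorbing $\UA$ into an adjacent factor, and $\Zero L$ handled by collapsing the antecedent to $\ZA$ via Lemma~\ref{Lm:zerozero}. No substantive difference from the paper's proof.
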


\begin{proof}
As usual, we proceed by induction on the derivation. The interesting cases are $\One L$,  $\One R$,  and $\Zero L$, as others are copied from the standard strong soundness proof of $\LEw$ w.r.t.\ square R-models (without constants~$\Zero$ and~$\One$). 

For $\One R$, we have to show that ${\to \One}$ is true, that is, $\delta \subseteq v(\One) = \UA$. This is a particular case of Lemma~\ref{Lm:deltaunit}.

For $\One L$, we consider two cases. If both $\Gamma$ and $\Delta$ are empty, then our induction hypothesis gives $\delta \subseteq v(C)$. By Lemma~\ref{Lm:deltaunit}, this is equivalent to $\UA \subseteq v(C)$ (recall that $v(C) \in \Af$), which is the truth of $\One \to C$. If, say, $\Delta$ is non-empty, then let $D_1$ be the first formula of $\Delta$. By definition of the $\Af$-unit, we have $v(\One) \circ v(D_1) = \UA \circ v(D_1) = v(D_1)$. Thus, interpretations of left-hand sides of the premise and the conclusion are identical. The case of non-empty $\Gamma$ is symmetric. 

For $\Zero L$, since the antecedent includes $\Zero$, its interpretation is $\ZA$ by Lemma~\ref{Lm:zerozero}. By definition of the $\Af$-zero, $\ZA \subseteq v(C)$, where $C$ is the succedent.
\end{proof}

As for completeness, we prove only its weak version (Section~\ref{S:weak}). For strong completeness, there is a counterexample (Section~\ref{S:counterexample}).

\section{Weak Completeness}\label{S:weak}

We prove weak completeness of $\LEwuz$ w.r.t.\ the class of non-standard square R-models, as defined in the previous section.

\begin{thm}\label{Th:weak}
If a sequent (in the language of $\BS,\SL,\cdot,\wedge,\One,\Zero$) is true in all non-standard square R-models, then it is derivable in $\LEwuz$.
\end{thm}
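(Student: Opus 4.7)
The strategy is a canonical-model construction, adapting the approach used in Theorems~\ref{Th:AndrekaMikulasLE} and~\ref{Th:Mikulas} to the extended language, with constants interpreted non-standardly. Given a sequent $\Pi_0 = A_1,\ldots,A_n \to B_0$ not derivable in $\LEwuz$, I will exhibit a non-standard square R-model that falsifies it. The intended universe is (a suitable refinement of) the free monoid $W = \Fm^*$ of finite formula-sequences, and the intended valuation is
\[
v(A) = \{(\Gamma,\Gamma\Sigma) \mid \Sigma \to A\ \text{is derivable in}\ \LEwuz\}.
\]
Set $\Af = \{v(A) : A \in \Fm\}$, $\UA = v(\One)$, and $\ZA = v(\Zero)$.

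The heart of the proof is a Truth Lemma, proved by induction on formula-complexity, asserting that $v$ is a homomorphism from the formula-algebra into $(\Af,\circ,\BSU,\SLU,\cap)$ in the sense of Definition~\ref{Df:RmodNS}. The ``easy'' inclusions $v(A) \circ v(B) \subseteq v(A \cdot B)$, $v(A) \cap v(B) \subseteq v(A \wedge B)$, and $v(A \BS B) \subseteq v(A) \BSU v(B)$ (with its $\SL$-dual) follow from $\cdot R$, $\wedge R$, and Cut with the derivable $A, A \BS B \to B$ (resp.\ $B \SL A, A \to B$), respectively. For the converse inclusion in the division cases, one exploits the identity witness $(\Lambda,A) \in v(A)$ coming from $A \to A$: applied to a pair $(A,A\Sigma) \in v(A) \BSU v(B)$, it forces $A,\Sigma \to B$, whence $\Sigma \to A \BS B$ by $\BS R$. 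The unit property of $v(\One)$ in $\Af$ combines $\One L$, $\One R$ with Lemma~\ref{Lm:deltaunit}, and the zero property of $v(\Zero)$ combines $\Zero L$ with Lemma~\ref{Lm:zerozero}.

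Once the Truth Lemma is established, falsification is immediate. Using the identities $A_i \to A_i$ we obtain $(\Lambda, A_1,\ldots, A_n) \in v(A_1) \circ \cdots \circ v(A_n)$; if this pair also lay in $v(B_0)$, the Truth Lemma would give $A_1,\ldots,A_n \to B_0$ derivable in $\LEwuz$, contradicting our hypothesis. Thus the sequent fails in the canonical model.

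The main obstacle is the reverse direction $v(A \cdot B) \subseteq v(A) \circ v(B)$: given $\Sigma \to A \cdot B$ one must produce a decomposition $\Sigma = \Sigma_1\Sigma_2$ with $\Sigma_1 \to A$ and $\Sigma_2 \to B$, which fails syntactically when $\Sigma$ is the singleton $A \cdot B$ itself; a related vacuity issue arises for divisions when some $\Gamma \in \Fm^*$ admits no $A$-predecessor, making $v(A) \BSU v(B)$ strictly larger than the syntactic $v(A \BS B)$. Both difficulties are overcome by taking $W$ as the quotient $\Fm^*/{\sim}$ modulo derivability-equivalence on sequences (so that $[A \cdot B]$ coincides with $[A,B]$) and, if need be, further enlarging it with ``saturating'' witnesses ensuring predecessors under each $v(A)$. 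Reconciling this refinement with intersection and with the non-standard interpretations of $\One$ and $\Zero$ is the technical crux that distinguishes this proof from the plain Andr\'eka--Mikul\'as argument.
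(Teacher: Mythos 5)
Your high-level plan (a universal model plus a Truth Lemma proved by induction on formulae, with the hard directions being the decomposition $v(A\cdot B)\subseteq v(A)\circ v(B)$ and the predecessor problem for divisions) matches the shape of the paper's argument, and you have correctly located the two obstacles. But the proposal stops exactly where the real work begins: the sentence ``both difficulties are overcome by taking $W$ as the quotient $\Fm^*/{\sim}$ \dots and, if need be, further enlarging it with saturating witnesses'' is not a proof, and the specific base you chose cannot be repaired in that way. First, the quotient by derivability-equivalence breaks well-definedness of your valuation: membership of $([\Gamma],[\Gamma\Sigma])$ in $v(A)$ would depend on the chosen representative $\Sigma$, since the Lambek calculus has no cancellation. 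Second, the vacuity problem for divisions is not a minor adjustment: in a square model every point of $W$ must have, for every formula $A$, an $A$-predecessor, and the empty sequence $\Lambda\in\Fm^*$ has none (no proper prefix decomposition exists), so the required ``saturating witnesses'' cannot live inside any quotient of the free monoid --- one must add genuinely new points lying strictly before existing ones, and then re-verify transitivity, the product decomposition, and compatibility with $\One$ and $\Zero$ for all the new pairs these points create.

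That missing content is precisely the paper's Lemma~\ref{Lm:graph}: an abstract countable set of vertices is built step by step (scheduled by bijections $\NN\to(V\times\Fm)\times\NN$ etc.), each edge carries a \emph{single} formula label, and the invariant maintained is the one-directional derivability $\vdash\ell(x,z)\yields\ell(x,y)\cdot\ell(y,z)$ together with reflexivity (loops labelled $\One$), transitivity and antisymmetry; predecessors and product-splitting midpoints are introduced as fresh vertices with labels $A\cdot\ell(y,z)$, $\ell(r,x)\cdot B$, $C\cdot\ell(z,s)$, and antisymmetry is what prevents label conflicts. You also omit the one case that cannot be handled by adding a vertex, namely splitting $\vdash\One\yields B\cdot C$ on a loop ($x=z$), which the paper resolves by cut elimination to get $\vdash\Lambda\yields B$ and $\vdash\Lambda\yields C$; this is exactly the point where the argument is confined to weak completeness (cf.\ Theorem~\ref{Th:counterexample}), so a proof that does not isolate this step is missing something essential. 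Finally, the unit and zero clauses are the easy part and your sketch of them is fine (though the zero case needs only $\Zero L$, not Lemma~\ref{Lm:zerozero}), but as written the proposal has a genuine gap at its technical core.
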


Our proof follows the line of the proof of Theorem~\ref{Th:AndrekaMikulas} (surprisingly, not Theorem~\ref{Th:AndrekaMikulasLE}, see Remark~\ref{Rem:filter}  below): we build a labelled graph with specific properties and use it to construct a universal model.

Throughout this section, $\vdash A \to B$ means ``$A \to B$ is derivable in $\LEwuz$.''

\begin{lem}\label{Lm:graph}
 There exists a labelled directed graph $G = (V, E, \ell)$, where $V \ne \varnothing$, $E \subseteq V \times V$, and $\ell \colon E \to \Fm$, such that the following holds:
 \begin{enumerate}
  \item\label{It:transrefl} $E$ is transitive;
  \item\label{It:unit} $E$ is reflexive and $\ell(x,x) = \One$ for any $x \in V$;
  \item\label{It:antisymm} $E$ is antisymmetric: if $x \ne y$ and $(x,y) \in E$, then $(y,x) \notin E$;
  \item\label{It:cdot1} if $(x,y) \in E$ and $(y,z) \in E$, then $\vdash \ell(x,z) \yields \ell(x,y) \cdot \ell(y,z)$;
  \item\label{It:cdot2} if\/ $\vdash \ell(x,z) \yields B \cdot C$, then
  there exists  $y \in V$ such that $(x,y) \in E$, $(y,z) \in E$,
  $\vdash \ell(x,y) \yields B$, and $\vdash \ell(y,z) \yields C$;
  \item\label{It:Aprev} for any $y \in V$ and any formula $A$ there exists $x \in V$ such that  
  for any $z \in V$, if $(y,z) \in E$, then $\ell(x,z) =  A \cdot \ell(y,z)$; 
  \item\label{It:Asucc} for any $y \in V$ and any formula $A$ there exists $z \in V$ such that for any $x \in V$, if $(x,y) \in E$, then $\ell(x,z) = \ell(x,y) \cdot A$.
 \end{enumerate}

\end{lem}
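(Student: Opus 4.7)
The plan is to construct $G$ as the union $(V,E,\ell)=\bigcup_n(V_n,E_n,\ell_n)$ of an increasing sequence of finite graphs obtained by iteratively adding fresh vertices to witness pending instances of (5), (6), and (7). We start with $V_0=\{v_0\}$, $E_0=\{(v_0,v_0)\}$, $\ell_0(v_0,v_0)=\One$, and use a dovetailed enumeration that revisits every possible requirement over the growing $V$ infinitely often. Properties (1)--(4) will be maintained as invariants at each stage, while (5)--(7) will follow directly from the fairness of the enumeration.

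For a type (6) request with existing $y$ and formula $A$, I introduce a fresh $x$, declare $(x,x)\in E$ and $(x,z)\in E$ for every $z\in V_n$ with $(y,z)\in E_n$, and set $\ell(x,z):=A\cdot\ell(y,z)$ together with $\ell(x,x):=\One$. Type (7) is symmetric. For a type (5) request with $(x,z)\in E_n$, $x\ne z$, and $\vdash \ell_n(x,z)\to B\cdot C$, I introduce a fresh $y$, declare $(w,y)\in E$ for every $w\in V_n$ with $(w,x)\in E_n$ and $(y,w)\in E$ for every $w\in V_n$ with $(z,w)\in E_n$, and set $\ell(w,y):=\ell(w,x)\cdot B$, $\ell(y,w):=C\cdot\ell(z,w)$, and $\ell(y,y):=\One$. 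In each case the new vertex is attached only to one ``cone'' of the existing graph, which makes transitivity, reflexivity, and antisymmetry routine to verify; in particular no vertex comparable to both $x$ and $z$ on the wrong side is created, so $(y,w),(w,y)\in E$ would force $w\le x\le z\le w$, hence $x=z$, contradicting our assumption.

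The nontrivial check is invariant (4) for the newly added edges in the (5) step: for $w$ with $(w,x)\in E_n$ and $w'$ with $(z,w')\in E_n$ we must verify $\vdash \ell(w,w')\to \ell(w,y)\cdot\ell(y,w')=\ell(w,x)\cdot B\cdot C\cdot\ell(z,w')$. This follows from two applications of the inductive instance of (4), which give $\vdash \ell(w,w')\to \ell(w,x)\cdot\ell(x,z)\cdot\ell(z,w')$, combined with the hypothesis $\vdash \ell(x,z)\to B\cdot C$ and monotonicity of $\cdot$. The analogous checks for (6) and (7) are simpler: each only requires prefixing or appending a single factor $A$ and invoking monotonicity.

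The one genuine obstacle is the degenerate case $x=z$ in a (5) request, where introducing a fresh intermediate $y$ would force both $(x,y),(y,x)\in E$ and violate antisymmetry; we are then compelled to take $y=x$ and verify $\vdash \One\to B$ and $\vdash \One\to C$ individually. This reduces via $\One L$ to $\vdash \Lambda\to B$ and $\vdash \Lambda\to C$, which I obtain from the hypothesis $\vdash \Lambda\to B\cdot C$ by a direct cut-free analysis: the last rule of any cut-free proof of $\Lambda\to B\cdot C$ must be $\cdot R$ (no other cut-free rule produces a principal formula $B\cdot C$ in the succedent), and its two premises split the empty antecedent as $\Lambda,\Lambda$, yielding exactly $\Lambda\to B$ and $\Lambda\to C$.
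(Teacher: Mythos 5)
Your proposal is correct and follows essentially the same route as the paper's proof: the same iterated construction with a single reflexive $\One$-labelled starting point, the same three kinds of witness-adding steps with identical edge sets and labels, the same transitivity/antisymmetry/property-(4) checks, and the same treatment of the degenerate $x=z$ case via cut elimination and the observation that a cut-free proof of $\Lambda\to B\cdot C$ must end in $\cdot R$.
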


Before proving Lemma~\ref{Lm:graph}, let us use it to establish Theorem~\ref{Th:weak}.

\begin{proof}[Proof of Theorem~\ref{Th:weak}]
 Using the graph $G$ provided by Lemma~\ref{Lm:graph}, we construct a {\em universal} non-standard square R-model $\Md_0^{\Af} = (W, \Af, \UA, \ZA, v)$ in the following way:
 \begin{align*}
  & W = V; && &
  & v(A) = \{ (x,y) \in E \mid\ \vdash \ell(x,y) \yields A \}; \\
  & \Af = \{ v(A) \mid A \in \Fm \}; && &
  & \UA = v(\One);\ \ZA = v(\Zero).
 \end{align*}

 Let us prove that the valuation function $v$ commutes with the operation of our calculus. 
 
 \vskip 3pt
 {\em Multiplication.} 
  If $(x,z) \in v(B \cdot C)$, then $\vdash \ell(x,z) \yields B \cdot C$. By property~\ref{It:cdot2} of graph $G$, there exists such $y$ that $(x,y) \in v(B)$ and $(y,z) \in v(C)$. Therefore, $(x,z) \in v(B) \circ v(C)$.
  This establishes the inclusion $v(B \cdot C) \subseteq v(B) \circ v(C)$.
  
  For the opposite inclusion, take $(x,y) \in v(B)$ and $(y,z) \in v(C)$. By transitivity, $(x,z) \in E$. By property~\ref{It:cdot1} of $G$, 
  $\vdash \ell(x,z) \yields \ell(x,y) \cdot \ell(y,z)$. We derive $\ell(x,z) \yields B \cdot C$ as follows:
  \[
   \infer[Cut]
   {\ell(x,z) \yields B \cdot C}
   {\ell(x,z) \yields \ell(x,y) \cdot \ell(y,z) & \infer[\cdot L]{\ell(x,y) \cdot \ell(y,z) \yields B \cdot C}
   {\infer[\cdot R]{\ell(x,y), \ell(y,z) \yields B \cdot C}{\ell(x,y) \yields B & \ell(y,z) \yields C}}}
  \]
Therefore, $(x,z) \in v(B \cdot C)$.
  
  \vskip 3pt
  {\em Division.} Let $(y,z) \in v(A \BS B)$, that is, $\vdash \ell(y,z) \yields A \BS B$. Take an arbitrary $x \in W$ such that $(x,y) \in v(A)$, that is, $\vdash \ell(x,y) \yields A$. Now by transitivity $(x,z) \in E$, and $\ell(x,z) \yields B$ is derived using two cuts, with $\ell(x,z) \yields \ell(x,y) \cdot \ell(y,z)$ (property~\ref{It:cdot1} of $G$) and $A \cdot (A \BS B) \yields B$.
  This establishes the inclusion $v(A\BS B) \subseteq v(A) \BS v(B)$.
  
  For the opposite inclusion, take $(y,z) \in v(A) \BS v(B)$ and apply property~\ref{It:Aprev} to $y$ and $A$. 
  For the vertex $x \in W$ given by this property, we have $\ell(x,y) = A \cdot \ell(y,y) = A \cdot \One$ 
  and $\ell(x,z) = A \cdot \ell(y,z)$. The first condition gives $(x,y) \in v(A)$ (because $\vdash A \cdot \One \yields A$). Hence, $(x,z) \in v(B)$, i.e., $(x,z) \in E$ and $\vdash \ell(x,z) \yields B$. In particular, $(x,z) \in E$ gives $(y,z) \in E$ (which was not guaranteed in advance). Since  $\ell(x,z) = A \cdot \ell(y,z)$, we may proceed as follows:
  \[
   \infer[\BS R]
   {\ell(y,z) \yields A \BS B}
   {\infer[Cut]{A, \ell(y,z) \yields B}
   {\infer[\cdot R]{A, \ell(y,z) \yields A \cdot \ell(y,z)}{A \yields A & \ell(y,z) \yields \ell(y,z)} & A \cdot \ell(y,z) \yields B}}
  \]
  This establishes $(y,z) \in v(A \BS B)$. Thus, we get $v(A) \BS v(B) = v(A \BS B)$.
  
The equality $v(B \SL A) = v(B) \SL v(A)$ is established symmetrically.

\vskip 3pt
{\em Intersection.} We have 
$ v(A \wedge B) = \{ (x,y) \in E \mid\ \vdash \ell(x,y) \yields A \wedge B \} = 
\{ (x,y) \in E \mid\ \vdash \ell(x,y) \yields A \mbox{ and } \vdash \ell(x,y) \yields B  \} = v(A) \cap v(B)$.
In the second equality, the $\supseteq$ inclusion is by $\wedge R$, and the $\subseteq$ one is by cut with $A \wedge B \yields A$ and $A \wedge B \yields B$.

\vskip 3pt
{\em Unit.} Here we have $v(\One) = \UA$ by definition, and $\UA$ is the $\Af$-unit.
Indeed, since 
$A$ is equivalent to $\One \cdot A$
and any relation in $\Af$ is of the form $v(A)$, we have 
\(
 \UA \circ v(A) = v(\One) \circ v(A) = 
 v(\One \cdot A) = v(A)
\). Similarly for $v(A) \circ \UA$.

\vskip 3pt
{\em Zero.} By definition $v(\Zero) = \ZA$, let us show that it is indeed the $\Af$-zero. Each relation in $\Af$ is of the form $v(A)$ for some $A \in \Fm$, so we need to show that $v(\Zero) \subseteq v(A)$ for any formula $A$. Let $(x,y) \in v(\Zero)$, that is, $\vdash \ell(x,y) \yields \Zero$. By cut with $\Zero \yields A$ (which is an instance of $\Zero L$), we get $\vdash \ell(x,y) \yields A$, therefore $(x,y) \in v(A)$.

 \vskip 5pt
 Now let us show that $\Md^\Af_0$ is indeed a universal model, that is, a sequent is true in this model if and only if it is derivable in $\LEwuz$. The interesting direction is of course the ``only if'' one (the ``if'' direction is just weak soundness).
 
 Moreover, we may consider only sequents of the form ${\yields B}$, since from $A_1, \ldots, A_n \yields B$ one can derive ${\yields A_n \BS (A_{n-1} \BS \ldots \BS (A_1 \BS B) \ldots )}$, and {\em vice versa,} and by strong soundness these two sequents are true or false in $\Md_0^\Af$ simultaneously.
 
 Let ${\yields B}$ be true in $\Md_0^{\Af}$, that is, $\delta \subseteq v(B)$. 
 Take an arbitrary $x \in V$. We have $(x,x) \in v(B)$, that is, $\vdash \ell(x,x) \yields B$. On the other hand, $\ell(x,x) = \One$ by property~\ref{It:unit} of $G$. Applying cut with ${\yields \One}$ (axiom), we derive the desired sequent ${\yields B}$.
 
 Existence of a universal model yields weak completeness: if a sequent is true in all models, then it is true in the universal one, and therefore derivable in the calculus.
\end{proof}

Now we finish our argument by proving Lemma~\ref{Lm:graph}. The spirit of this proof is the same as the central lemma of the proof of Theorem~\ref{Th:AndrekaMikulas} by Andr\'eka and Mikul\'as. However, for the step-by-step construction we use the countable schedule function, as in~\cite{KuznetsovRyzhkova2020}, which is sufficient for enumerating formulae, rather than consider abstract algebras of arbitrary cardinality, as in~\cite{AndrekaMikulas1994}; see Remark~\ref{Rem:algebras} below. The presentation of the proof closely follows the line of~\cite[Lemma~14]{KuznetsovRyzhkova2020}; the figures are adaptations of those by Andr\'eka and Mikul\'as~\cite{AndrekaMikulas1994} to the reflexive situation.

\begin{proof}[Proof of Lemma~\ref{Lm:graph}]
We construct a growing sequence of labelled graphs $G_n = (V_n, E_n, \ell_n)$, where each $G_n$ is an induced subgraph of $G_{n+1}$. The countable set of vertices $V = \bigcup_{n=0}^{\infty} V_n$ is fixed before the process starts. Our aim is the union graph $G = \left( \bigcup_{n=0}^{\infty} V_n, \bigcup_{n=0}^{\infty} E_n, \bigcup_{n=0}^{\infty} \ell_n \right)$. 

The desired properties \ref{It:transrefl}--\ref{It:cdot1} are {\em maintained} along the sequence, that is, they will hold for each $G_n$. In contrast, properties \ref{It:cdot2}--\ref{It:Asucc} are {\em achieved} only in the limit; each transition from $G_n$ to $G_{n+1}$ is a step towards satisfying one of these properties (in a particular case).

The initial graph $G_0$ is just a reflexive point with the required unit label on the loop:
$G_0 = (\{ \star \}, \{ (\star,\star) \}, (\star,\star) \mapsto \One)$. Properties~\ref{It:transrefl}--\ref{It:cdot1} are trivially satisfied.

Each step is a transition of one of three types: for $t=0,1,2$, a transition of type~$t$ is a step from $G_{3i+t}$ to $G_{3i+t+1}$. In order to ensure that all necessary transitions are eventually performed, let us define two bijective schedule functions:
\begin{align*}
& \sigma \colon \NN \to (V \times \Fm) \times \NN \\
& \varsigma \colon \NN \to (V \times V \times \Fm \times \Fm) \times \NN 
\end{align*}
Here $\sigma$ enumerates pairs of a (possible) vertex and a formula, and the second component (a natural number) ensures that each such pair is ``visited'' infinitely many times. The second function, $\varsigma$, does the same for quadruples including two vertices and two formulae.
Now let us define our transitions.

\vskip 3pt
{\em Transition of type 0,} from $G_{3i}$ to $G_{3i+1}$. Let $\sigma(i) = ((y,A),k)$. If $y  \notin V_{3i}$, we skip: $G_{3i+1} = G_{3i}$. Otherwise we add a new vertex $x \in V - V_{3i}$ (such a vertex always exists, since $V$ is countable and $V_{3i}$ is finite) 
with a loop edge $(x,x)$, $\ell(x,x) = \One$, 
and for each $z \in V_{3i}$, such that  $(y,z) \in E_{3i}$, an edge $(x,z)$ with $\ell(x,z) = A \cdot \ell(y,z)$. (In particular, we add an edge $(x,y)$ with label $A \cdot \One$.)

Let us show that properties~\ref{It:transrefl}--\ref{It:cdot1} keep valid for $G_{3i+1}$. Indeed, the new vertex $x$ is reflexive, and the loop has the correct label $\One$. Antisymmetry is also maintained: the new vertex $x$ has no ingoing edges, except the loop. 

Transitivity and property~\ref{It:cdot1} are checked as follows. We have to verify that for any $x',y',z' \in V_{3i+1}$ if $(x',y') \in E_{3i+1}$ and $(y',z') \in E_{3i+1}$, then $(x',z') \in E_{3i+1}$ and $\vdash \ell(x',z') \to \ell(x',y') \cdot \ell(y',z')$. The interesting case is when at least one of these vertices is new (that is, not from $V_{3i}$). This means $x' = x$. If $y' = x$, we trivially get $(x',z') = (y',z') \in E_{3i+1}$, and $\vdash \ell(x',z') \to \ell(x',y') \cdot \ell(y',z')$, since $\ell(x',y') = \ell(x,x) = \One$. Now let $y',z'$ be old vertices (from $V_{3i}$). Since $(x,y')$ and $(x,z')$ were added, edges $(y,y')$ and $(z,z')$ are in $E_{3i}$. Let $\ell(y,y') = B$, $\ell(y',z') = C$, and $\ell(y,z') = D$ (the latter edge exists by transitivity of $G_{3i}$). Then the picture is as follows (new edges are dashed):
\begin{center}
\begin{tikzpicture}[line width=.7pt]
\begin{scope}[scale=.8]
 \node[circle,inner sep=2pt,draw,minimum size=7mm,dashed] (x) at (0,0) {$x$};
 \node[circle,inner sep=2pt,draw,minimum size=7mm] (y) at (1.5,2.5) {$y$};
 \node[circle,inner sep=2pt,draw,minimum size=7mm] (yx) at (4.5,2.5) {$y'$};
 \node[circle,inner sep=2pt,draw,minimum size=7mm] (zx) at (6,0) {$z'$};
 
 \draw (x) edge[->,dashed] node [midway,above,sloped] {$A \cdot \One$}  (y) ;
 
 \draw (y) edge[->] node[midway,above] {$B$} (yx);
 
 \draw (yx) edge[->] node[midway,above,sloped] {$C$} (zx);
 
 \draw (x) edge[->,dashed] node[midway,below] {$A \cdot D$} (zx);
 
 \draw (y) edge[->] node[midway,below,sloped] {$D$} (zx);
 
 \draw (x) edge[->,dashed] node[midway,below,sloped] {$A \cdot B$} (yx);
 
 \draw (x) edge[in=150,out=210,looseness=8,->,dashed] node[midway,left] {$\One$} (x);
 
 \draw (zx) edge[in=30,out=-30,looseness=8,->] node[midway,right] {$\One$} (zx);
 
 \draw (y) edge[in=180,out=120,looseness=8,->] node[midway,left] {$\One$} (y);
 \draw (yx) edge[out=60,in=0,looseness=8,->] node[midway,right] {$\One$} (yx);
\end{scope}
\end{tikzpicture}
\end{center}
(Notice that some of the vertices $y$, $y'$, $z'$ could coincide; in this case, the corresponding edges are loops with label $\One$.) 

We have indeed added the necessary edge $(x,z')$, and it remains to check that $\vdash \ell(x,z') \to \ell(x,y') \cdot \ell(y',z')$, that is, $A \cdot D \to (A \cdot B) \cdot C$. By associativity, we may replace $(A \cdot B) \cdot C$ with $A \cdot (B \cdot C)$, and then $A \cdot D \to A \cdot (B \cdot C)$ is derived from $D \to B \cdot C$ by applying $\cdot R$ and $\cdot L$. The sequent $D \to B \cdot C$ is derivable by property~\ref{It:cdot1} of the old graph $G_{3i}$.

\vskip 3pt 
{\em Transition of type 1,} from $G_{3i+1}$ to $G_{3i+2}$, is similar. Let $\sigma(i) = ((y,A),k)$. If $y \notin V_{3i}$, we skip, and otherwise add a new vertex $z$ with its loop and for each $x \in V_{3i}$, if $(x,y) \in E_{3i}$, add an edge $(x,z)$ with $\ell(x,z) = \ell(x,y) \cdot A$. As for type~0, properties \ref{It:transrefl}--\ref{It:cdot1} keep valid.

\vskip 3pt
{\em Transition of type 2,} from $G_{3i+2}$ to $G_{3i+3}$. Let $\varsigma(i) = ((x,z,B,C),k)$. If $x$ or $z$ is not in $V_{3i+2}$ or if $\not\vdash \ell(x,z) \to B \cdot C$, we skip. We also skip if $x=z$: in this case, we do not need to add a new vertex to satisfy property~\ref{It:cdot2}, see below. Otherwise, we add a new vertex $y$, with its loop $(y,y)$, $\ell(y,y) = \One$, and the following edges:
\begin{itemize}
 \item edge $(r,y)$, with $\ell(r,y) = \ell(r,x) \cdot B$, for each $r$ such that $(r,x) \in E_{3i+2}$;
 \item edge $(y,s)$, with $\ell(y,s) = C \cdot \ell(z,s)$, for each $s$ such that $(z,s) \in E_{3i+2}$.
\end{itemize}
(In particular, we add edges $(x,y)$ and $(y,z)$ with labels $\One \cdot B$ and $C \cdot \One$ respectively.)
The picture in this situation is as follows:
\begin{center}
\begin{tikzpicture}[line width=.7pt]
\begin{scope}[scale=1]
 \node[circle,inner sep=2pt,draw,minimum size=7mm] (x) at (0,0) {$x$};
 \node[circle,inner sep=2pt,draw,minimum size=7mm] (r) at (0,-2.5) {$r$};
 \node[circle,inner sep=2pt,draw,minimum size=7mm] (z) at (6,0) {$z$};
 \node[circle,inner sep=2pt,draw,minimum size=7mm] (s) at (6,-2.5) {$s$};
 \node[circle,inner sep=2pt,draw,minimum size=7mm,dashed] (y) at (3,1) {$y$};
 
 \draw (x) edge[->,dashed] node[midway,above,sloped] {$\One \cdot B$} (y);
 \draw (y) edge[->,dashed] node[midway,above,sloped] {$C \cdot \One$} (z);
 
 \draw (r) edge[->] node[midway,above,sloped] {$\ell(r,x)$} (x);
 \draw (z) edge[->] node[midway,above,sloped] {$\ell(z,s)$} (s);
 
 \draw (x) edge[->] node[midway,below,sloped] {$\ell(x,z)$} (z);
 
 \draw (r) edge[->,dashed] node[pos=0.4,above,sloped] {$\ell(r,x)\cdot B$} (y);
 \draw (y) edge[->,dashed] node[pos=0.6,above,sloped] {$C \cdot \ell(z,s)$} (s);
 
 \draw (x) edge[->,in=210,out=150,looseness=8] node[midway,left] {$\One$} (x);
 \draw (r) edge[->,in=210,out=150,looseness=8] node[midway,left] {$\One$} (r);

 \draw (z) edge[->,in=-30,out=30,looseness=8] node[midway,right] {$\One$} (z);
 \draw (s) edge[->,in=-30,out=30,looseness=8] node[midway,right] {$\One$} (s);
 
 \draw (y) edge[->,dashed,in=60,out=120,looseness=8] node[pos=0.2,left] {$\One$} (y);
\end{scope}
\end{tikzpicture}
\end{center}
In this picture, it is possible that $r=x$, or $z=s$, or even both. In such a case, the corresponding edge is a loop with label $\One$. However, $x \ne z$ by assumption, and also $r \ne s$ (for any $r$, $s$ in question). Indeed, if $r = s$, then by transitivity we get $(z,x) \in E_{3i+2}$, which violates antisymmetry of the old graph $G_{3i+2}$. Also, the new vertex $y$ is a distinct one.

The $r \ne s$ condition yields antisymmetry of the new graph $G_{3i+3}$. Indeed, a possible violation of antisymmetry should involve the new vertex $y$, but then the other vertex should be $r$ and $s$ at the same time.

Reflexivity of the new graph, with $\One$ labels on the loops, is by construction.

Let us check transitivity and property~\ref{It:cdot1}. Take $x',y',z'$ such that edges $(x',y')$ and $(y',z')$ belong to $E_{3i+3}$. The interesting case is when $x' \ne y'$ and $y' \ne z'$ (otherwise we just add a unit), and at least one of $x',y',z'$ is the new vertex $y$. Moreover, by antisymmetry, which we have already proved, we have $x' \ne z'$.

Consider three cases.

\vskip 5pt
{\em Case 1:} $x' = y$. Denote $s_1 = y'$ and $s_2 = z'$. Since $(y,s_1) \in E_{3i+3}$ and $s_1 \ne y$, we have $(z,s_1) \in E_{3i+2}$ and $\ell(y,s_1) = C \cdot \ell(z,s_1)$. (Possibly, $s_1 = z$.) For $s_2$, since it is also not $y$, we have $(s_1,s_2) \in E_{3i+2}$, and by transitivity of $G_{3i+2}$ we get $(z,s_2) \in E_{3i+2}$. Therefore, $(y,s_2) \in E_{3i+3}$ and $\ell(y,s_2) = C \cdot \ell(z,s_2)$.
 Now by property~\ref{It:cdot1} of the old graph we have $\vdash \ell(z,s_2) \to \ell(z,s_1) \cdot \ell(s_1,s_2)$, and via $\cdot R$, $\cdot L$, and associativity we obtain $\vdash C \cdot \ell(z,s_2) \to (C \cdot \ell(z,s_1)) \cdot \ell(s_1,s_2)$. This is the necessary sequent $\ell(y,s_2) \to \ell(y,s_1) \cdot \ell(y,s_2)$.

 Notice that here antisymmetry is crucial: otherwise, we could have $z' = s_2 = y$ (i.e., $s_1$ plays both as $s$ and $r$), in which case $\ell(y,s_2)$ would be $\One$, not $C \cdot \ell(z,s_2)$.
 
 \vskip 3pt
 {\em Case 2:} $z' = y$. Considered symmetrically.
 
 \vskip 3pt
 {\em Case 3:} $y' = y$. Denote $r = x'$ and $s = z'$; they are both distinct from $y$. We have $\ell(r,y) = \ell(r,x) \cdot B$ and $\ell(y,s) = C \cdot \ell(z,s)$. By shortcutting the path $r$--$x$--$z$--$s$ using transitivity, we see that $(r,s)$ is an edge of the old graph and $\vdash \ell(r,s) \to \ell(r,x) \cdot \ell(x,z) \cdot \ell(z,s)$. Now we recall that $\vdash \ell(x,z) \to B \cdot C$ by assumption and by cut and monotonicity conclude that $\vdash \ell(r,s) \to \ell(r,x) \cdot B \cdot C \cdot \ell(z,s)$. This is exactly (up to associativity) what we need: $\vdash \ell(r,s) \to \ell(r,y) \cdot \ell(y,s)$.

\vskip 5pt
Our construction shows that properties~\ref{It:transrefl}--\ref{It:cdot1} hold for each $G_n$. Therefore, they also hold for the limit graph $G$. Also notice that this graph is infinite, because, e.g., transitions of type~0 are used infinitely often. Hence, the set of vertices of $G$ is indeed the whole $V$. Thus, it remains to show that $G$ also enjoys properties~\ref{It:cdot2}--\ref{It:Asucc}.

Let us start with property~\ref{It:Aprev}. The vertex $y \in V$ belongs to $V_n$ for some $n$. Using the bijectivity of $\sigma$, we conclude that there exists such $i$ that $3i \geq n$ and $\sigma(i) = ((y,A), k)$ (for some $k$). Therefore, $y \in V_{3i}$, and at the transition of type~0 from $G_{3i}$ to $G_{3i+1}$ we added a vertex $x$, the properties of which are exactly the ones required.
Property~\ref{It:Asucc} is symmetric, using a transition of type~1.

Finally, let us prove property~\ref{It:cdot2}. Let us first consider the case where $x \ne z$. Again, vertices $x$ and $z$ belong to some $V_n$. There exists such $i$ that $\varsigma(i) = ((x,z,B,C),k)$ and $3i+2 \geq n$. Since we indeed have $\vdash \ell(x,z) \to B \cdot C$ and $x \ne z$, the corresponding transition of type~2 is not skipped. This transition introduces $y$ with the desired properties. Indeed, $\ell(x,y) = \One \cdot A$  and $\ell(y,z) = B \cdot \One$, which  yields $\vdash \ell(x,y) \to B$ and $\vdash \ell(y,z) \to C$.

Now let $x = z$. Then $\ell(x,z) = \ell(x,x) = \One$, and we have $\vdash \One \to B \cdot C$. By cut with ${\to \One}$ (axiom), we get $\vdash {} \to B \cdot C$. Let us eliminate the cut rule in this proof.\footnote{As noticed in Section~\ref{S:calculus}, cut elimination for $\LEwuz$ is standard. Below, in the proof of Theorem~\ref{Th:counterexample}, we sketch the cut elimination proof for an extension of $\LEwuz$.} The lowermost rule in the cut-free proof is nothing but $\cdot R$. Therefore, we get $\vdash {} \to B$ and $\vdash {} \to C$.  In its turn, by $\One L$, this yields $\vdash \One \to B$ and $\vdash \One \to C$, or, in other words, $\vdash \ell(x,x) \to B$ and $\vdash \ell(x,x) \to C$. Thus, taking $y = x$ satisfies property~\ref{It:cdot2}.
Notice that this is the only place in the proof where we cannot allow extra axioms from $\Hc$ and fail to prove strong completeness.\footnote{This failure is actually even not due to the absence of cut elimination in the presence of $\Hc$. Indeed, one could just add $\One \to b \cdot c$ ($b$ and $c$ are variables) as an axiom, while $\One \to b$ and $\One \to c$ are not derivable. The extra axiom $\One \to b \cdot c$ can be reformulated as a good sequent calculus rule, see Section~\ref{S:counterexample} for more details.}
\end{proof}

\begin{rem}\label{Rem:algebras}
Andr\'eka and Mikul\'as~\cite{AndrekaMikulas1994} in their proof of Theorem~\ref{Th:AndrekaMikulas}  use a more abstract algebraic framework: labels on graph edges are not formulae but elements of a residuated semi-lattice (that is, algebraic model for $\LL$, see~\cite{GalatosRLbook} for details). In other words, strong completeness appears as a corollary of a purely algebraic representation theorem. In our case, however, the representation theorem holds only for the Lindenbaum--Tarski algebra, which consists of equivalence classes of formulae. Indeed, the representation theorem for arbitrary algebras would have yielded strong completeness, which does not hold (see Section~\ref{S:counterexample} below). Thus, it does not matter whether to use formulae (as we do) or elements of this algebra as labels.
\end{rem}

\begin{rem}\label{Rem:filter}
If one takes the reduct of a non-standard square R-model by removing constants~$\Zero$ and~$\One$, the result will be a square R-model in the usual sense.
In particular, this holds for our universal model $\Md_0^{\Af}$. 
Thus, we obtain an alternative proof of Mikul\'as' Theorem~\ref{Th:Mikulas}. In our construction, labels on edges are formulae, while Mikul\'as used filters, which are sets of formulae. This is achieved by using the explicit unit constant: without $\One$, there are incompatible formulae which should be labels of the same loop (e.g., $p \BS p$ and $q \BS q$ for different variables $p$ and $q$).  The tradeoff for this simplification is the use of a more complicated, non-standard interpretations of constants. Below (Section~\ref{S:iterative}) we extend this argument to an infinitary extension of $\LEw$.
\end{rem}

\section{Counter-Example to Strong Completeness}\label{S:counterexample}

Unlike the case with Lambek's restriction, for $\LEw$ strong completeness w.r.t.\ square R-models does not hold.
We prove one result for both $\LEw$ and its extension $\LEwuz$. For the latter, strong completeness fails w.r.t.\ the class of non-standard models defined in Section~\ref{S:constants}. A series of potential counterexamples to strong completeness was given by Mikul\'as~\cite{Mikulas2015Studia}. Here we prove that the first one of them is indeed such a counterexample.

\begin{thm}\label{Th:counterexample}
Let $a,b,c,d$ be distinct variables.
Then 
 $a \BS a \yields b \cdot c \vDash_{\text{\rm square R-models}} 
 d \yields d \cdot b \cdot \bigl((c \cdot b) \wedge (a \BS a) \bigr) \cdot c$, but not 
 $a \BS a \yields b \cdot c \vdash_{\LEwuz}  d \yields d \cdot b \cdot \bigl((c \cdot b) \wedge (a \BS a) \bigr) \cdot c$.
Therefore, neither $\LEw$ is strongly complete w.r.t.\ square R-models (since non-derivability in $\LEwuz$ implies that in $\LEw$), nor $\LEwuz$ is strongly complete w.r.t.\ non-standard square R-models (those sequents do not include constants, whence non-standardness of the models has no effect on their validity).
\end{thm}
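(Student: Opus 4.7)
My plan addresses the two claims --- semantic entailment and syntactic non-derivability --- independently.

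For the semantic direction, in any square R-model the relation $v(a \BS a)$ always contains the diagonal, since whenever $(x,y) \in v(a)$, trivially $(x,y) \in v(a)$, so $(y,y) \in v(a \BS a)$ for every $y$. The hypothesis $v(a \BS a) \subseteq v(b) \circ v(c)$ then supplies, for each $y \in W$, a witness $w_y \in W$ with $(y, w_y) \in v(b)$ and $(w_y, y) \in v(c)$. For any $(x,y) \in v(d)$, I factorize through the chain $x \to y \to w_y \to w_y \to y$: the middle pair $(w_y, w_y)$ lies in $v(a \BS a)$ by reflexivity and in $v(c \cdot b)$ via the intermediate point $y$ (using $(w_y, y) \in v(c)$ then $(y, w_y) \in v(b)$), so $(w_y, w_y) \in v((c \cdot b) \wedge (a \BS a))$. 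Composing the four pairs places $(x,y) \in v(d \cdot b \cdot ((c \cdot b) \wedge (a \BS a)) \cdot c)$.

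For the syntactic direction, I would reformulate the single hypothesis as an inference rule
\[
  \infer[\mathrm{Hyp}]{\Pi \to b \cdot c}{\Pi \to a \BS a}
\]
and establish cut elimination for $\LEwuz + \mathrm{Hyp}$; this is the main technical obstacle, since one must handle principal cuts where $\mathrm{Hyp}$ produces $b \cdot c$ on the right and $\cdot L$ decomposes it on the left. A clean way is to introduce the equivalent left-rule version producing $\Gamma, a \BS a, \Delta \to C$ from $\Gamma, b, c, \Delta \to C$, which permutes cleanly with the other rules, and to verify by the standard Lambek-style induction (on derivation height with cut-formula complexity as secondary measure) that cuts can be eliminated in the enriched system.

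With cut elimination in hand, I observe that $b \cdot c$ does not appear as a sub-formula of the target sequent $d \to d \cdot b \cdot ((c \cdot b) \wedge (a \BS a)) \cdot c$, and $a \BS a$ does not appear in its antecedent (nor can it migrate there under cut-free right-rules, which only decompose the succedent), so neither formulation of $\mathrm{Hyp}$ can ever fire in a cut-free derivation --- the search reduces to pure $\LEwuz$ without the hypothesis. Since the antecedent is the single variable $d$, the only applicable rule is $\cdot R$ on the top product of the succedent, which splits $d$ into two sub-sequences. Enumerating the finitely many possible parse-splits, each branch eventually demands derivation of one of $\Lambda \to b$, $\Lambda \to c$, $\Lambda \to c \cdot b$, or $d \to c$, none of which is derivable in pure $\LEwuz$ (the first three require conjuring a variable out of the empty antecedent, the last confuses two distinct variables). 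Hence the target has no cut-free derivation in $\LEwuz + \mathrm{Hyp}$, and by cut elimination no derivation at all from the hypothesis.
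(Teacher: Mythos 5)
Your semantic argument is correct and coincides with the paper's. The gap is in the syntactic half, at the step you yourself flag as ``the main technical obstacle'': cut elimination for $\LEwuz$ enriched with an internalised form of the hypothesis $a \BS a \yields b \cdot c$ does not go through as you describe, and for your preferred left-rule version it is outright false. That rule derives $\Gamma, a\BS a, \Delta \yields C$ from $\Gamma, b, c, \Delta \yields C$, so it always deposits $a \BS a$ into the antecedent; consequently $\Lambda \yields b \cdot c$, which does follow from the hypothesis with cut (cut $\Lambda \yields a \BS a$ against $a \BS a \yields b \cdot c$), has no cut-free derivation in the enriched system, since neither $\Lambda \yields b$ nor $\Lambda \yields c$ is derivable and the new rule cannot conclude a sequent with empty antecedent. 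The underlying obstruction is the principal-versus-principal cut in which $a \BS a$ is introduced on the right by $\BS R$ and on the left by your rule: there is no reduction, because the two introductions do not match. The right-rule version, deriving $\Pi \yields b \cdot c$ from $\Pi \yields a \BS a$, has the mirror-image problem --- a principal cut of this rule against $\cdot L$ on the cut formula $b \cdot c$, again with no reduction. Since everything downstream (the subformula argument showing the rule never fires, and the proof search in pure $\LEwuz$) presupposes a cut-free derivation, the argument does not close.

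The paper's proof supplies the missing idea: before internalising the hypothesis, substitute $\One$ for both $a$ and $d$ (substitution preserves derivability from the correspondingly substituted hypotheses). This turns the hypothesis into $\One \BS \One \yields b \cdot c$, which is interderivable with $\Lambda \yields b \cdot c$, and reduces the goal to showing that $\Lambda \yields b \cdot \bigl((c\cdot b)\wedge\One\bigr)\cdot c$ is not derivable from $\Lambda \yields b \cdot c$. The axiom $\Lambda \yields b\cdot c$ internalises as the rule deriving $\Gamma,\Delta \yields F$ from $\Gamma, b, c, \Delta \yields F$, which introduces no formula at all in its conclusion and therefore commutes with cut on either side; cut elimination then holds, and a proof-search analysis (tracking the $\wedge R$ that introduces $(c\cdot b)\wedge\One$, using that $\Pi \yields \One$ forces $\Pi = \Lambda$ when $\Pi$ consists of variables, and then that $\Lambda \yields c\cdot b$ is not derivable in the enriched system) finishes the proof. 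To salvage your route you would have to either prove cut elimination for your enriched calculus directly --- which is impossible for the left-rule version --- or perform a reduction like the paper's substitution first.
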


\begin{proof}
 The first part (semantic entailment) is due to Mikul\'{a}s~\cite[Remark~5.3]{Mikulas2015Studia}. We reproduce it here in order to keep our exposition self-contained. Let us show that $(y,y) \in v(b \cdot ((c \cdot b) \wedge (a \BS a)) \cdot c)$ for any $y \in W$. Then for any $(x,y) \in v(d)$ we shall have $(x,y) \in v(d \cdot b \cdot ((c \cdot b) \wedge (a \BS a)) \cdot c)$.
 
 We have $(y,y) \in v(a \BS a)$, since $\delta \subseteq v(a) \BS v(a)$ for any $v(a)$. Therefore, since $a \BS a \yields b \cdot c$ is true in $\Md$, we get $(y,y) \in v(b) \circ v(c)$. This means that there exists such $z \in W$ that $(y,z) \in v(b)$ and $(z,y) \in v(c)$.
 In its turn, this gives $(z,z) \in v(c \cdot b)$; we also have $(z,z) \in v(a \BS a)$, therefore $(z,z) \in v((c \cdot b) \wedge (a \BS a))$.
 
  \begin{center}
 \begin{tikzpicture}[line width=.7pt]
 \begin{scope}[scale=0.6]

\node[circle,inner sep=2pt,draw,minimum size=5mm] (a) at (0,0) {$y$};
\node[circle,inner sep=2pt,draw,minimum size=5mm] (b) at (4,0) {$z$};

\draw (a) edge[out=150,in=210,looseness=15,->] node[left] {$a \BS a$} (a);
\draw (a) edge[out=30,in=150,->] node[above] {$b$} (b);
\draw (b) edge[out=210,in=-30,->] node[below] {$c$} (a);
\draw (b) edge[out=30,in=-30,looseness=15,->] node[right] {$(c \cdot b) \wedge (a \BS a)$} (b);

\end{scope}
\end{tikzpicture}
\end{center}
  This yields $(y,y) \in v(b) \circ v((c \cdot b) \wedge (a \BS a)) \circ v(c)$, q.e.d.

 Now let us show that $d \yields d \cdot b \cdot \bigl((c \cdot b) \wedge (a \BS a) \bigr) \cdot c$ is not derivable from $a \BS a \yields b \cdot c$ in $\LEwuz$.  We do it in a syntactic way.  
 Suppose the contrary and let the sequent be derivable from our hypotheses.  In this derivation, let us substitute $\One$ for $a$ and $d$. This yields derivability of $\One \to \One \cdot b \cdot \bigl( (c \cdot b) \wedge (\One \BS \One) \bigr) \cdot c$ from $\One \BS \One \to b \cdot c$.
 
 Next, we notice that $\One \BS \One \to b \cdot c$ is derivable from ${\to b \cdot c}$ using $\One L$, $\One R$, and $\BS L$. On the other hand, given $\One \to \One \cdot b \cdot \bigl( (c \cdot b) \wedge (\One \BS \One) \bigr) \cdot c$, we apply cut with $\to \One$ (axiom) and $\One \cdot b \cdot \bigl( (c \cdot b) \wedge (\One \BS \One) \bigr) \cdot c \to b \cdot \bigl( (c \cdot b) \wedge \One \bigr) \cdot c$ (derivable in $\LEwuz$). This argument gives the following: $\yields b \cdot \bigl( (c \cdot b) \wedge \One \bigr) \cdot c$ is derivable in $\LEwuz$, extended with $\yields b \cdot c$ as an extra axiom.
 
 Let us introduce an auxiliary calculus $\LEwuzbc$, which is $\LEwuz$ extended with the following inference rule:
 \[
  \infer[bc]
  {\Gamma, \Delta \yields F}
  {\Gamma, b, c, \Delta \yields F}  
 \]
 (Notice that here $b$ and $c$ are {\em concrete} variables, not meta-symbols.)
 
 Adding this new rule to $\LEwuz$ is equivalent to adding $\yields b \cdot c$ as an axiom. Indeed, $\yields b \cdot c$ can be derived using the $bc$ rule and the $bc$ rule can be simulated using $\yields b \cdot c$ and cut.
 
The new calculus $\LEwuzbc$, however, enjoys cut elimination. The proof is standard (going back to Lambek's original paper~\cite{Lambek1958}) and proceeds by nested induction: (1) on the complexity of  the formula $A$ being cut; (2) on the height of the derivation tree above the cut.

At each step we consider the lowermost rules in the derivations of the premises of cut. The only new situation here is when at least one of these rules is $bc$; other cases are standard.

For $bc$ on the left, we propagate cut as follows:
\[
 \infer[Cut]{\Gamma, \Pi', \Pi'', \Delta \yields B}
 {\infer[bc]{\Pi',\Pi'' \yields A}
 {\Pi',b,c,\Pi'' \yields A} & 
 \Gamma, A, \Delta \yields B} 
\quad
\raisebox{1em}{$\leadsto$}
\quad
  \infer[bc]{\Gamma, \Pi', \Pi'', \Delta \yields B}
 {\infer[Cut]{\Gamma, \Pi', b,c,\Pi'', \Delta \yields B}
 {\Pi',b,c,\Pi'' \yields A & 
 \Gamma, A, \Delta \yields B}}
\]

For $bc$ on the right:
\[
\infer[Cut]{\Gamma',\Gamma'', \Pi,\Delta \yields B}
{\Pi \yields A & \infer[bc]{\Gamma', \Gamma'', A, \Delta \yields B}
{\Gamma', b,c, \Gamma'', A, \Delta \yields B}}
\quad\raisebox{1em}{$\leadsto$}\quad
 \infer[bc]{\Gamma',\Gamma'', \Pi,\Delta \yields B}
 {\infer[Cut]{\Gamma', b,c, \Gamma'', \Pi, \Delta \yields B}
 {\Pi \yields A & \Gamma', b,c, \Gamma'', A, \Delta \yields B}}
\]
and similarly in the case when $b,c$ appear in $\Delta$.

Now we may suppose that $\yields b \cdot \bigl( (c \cdot b) \wedge \One \bigr) \cdot c$ has a cut-free proof in $\LEwuzbc$. Let us track this proof from the goal sequent to the application of $\wedge R$ which introduces $(c \cdot b) \wedge \One$. Below this $\wedge R$ there are only two applications of $\cdot R$ and several ones of $bc$. Therefore, the application of $\wedge R$ 
derives $\Pi \to (c \cdot b) \wedge \One$ from $\Pi \to c \cdot b$ and $\Pi \to \One$,
where $\Pi$ is a sequence of $b$'s and $c$'s.

However, $\Pi \yields \One$, if $\Pi$ contains no connectives, is derivable only if $\Pi$ is empty. Thus, we get derivability of $\yields c \cdot b$. Let us again track its derivation up to the application of $\cdot R$, which derives $\Phi,\Psi \to c \cdot b$ from $\Phi \to c$ and $\Psi \to b$.
Here the sequence $\Phi,\Psi$ was obtained by several applications of the $bc$ rule. Therefore, either $\Phi$ is empty, or the first element of $\Phi$ is $b$. On the other hand, $\Phi \yields c$, where $\Phi$ is a sequence of $b$'s and $c$'s, is derivable only if $\Phi = c$. Contradiction.
\end{proof}

\section{An Infinite Conjunction: Iterative Division}\label{S:iterative}

Now let us return to our weak completeness result (Section~\ref{S:weak}). 
The construction used in its proof, where each edge is labelled by just one formula\footnote{In the setting of Mikula\'s~\cite{Mikulas2015Studia,Mikulas2015Synthese}, this formula generates a {\em principal} filter of all formulae which are ``valid'' on the edge.}, allows extension to {\em infinite} conjunctions.

In order to make our syntax simpler, we shall not consider arbitrary infinite conjunctions, since this would require development of an infinitary formula language. We concentrate on one important particular case, in which the formula language is finitary (while proofs could be infinitary). This particular case of infinitary conjunction is connected to the Kleene star. 

In relational models, the Kleene star is the operation of taking the reflexive-transitive closure of a relation. Thus, it can be represented as an infinite union: \[R^* = \delta \cup R \cup (R \circ R) \cup (R \circ R \circ R) \cup \ldots\] Adding a union-like connective, however, causes incompleteness issues connected with distributivity. A concrete corollary of the distributivity law, using meet and the Kleene star, which is not derivable without distributivity, is given in~\cite[Theorem~4.1]{Kuzn2018AiML}.

When put under division, however, the infinite union turns into an infinite intersection: \[S \SL R^* = 
S \cap (S \SL R) \cap ((S \SL R) \SL R) \cap \ldots,\] and similarly for $R^* \BS S$. Thus, instead of one unrestricted Kleene star, we consider two composite connectives: $A^* \BS B$ and $B \SL A^*$. Following Sedl\'ar~\cite{Sedlar2020}, who introduced similar connectives in a non-associative setting and with positive iteration (Kleene plus) instead of Kleene star (due to Lambek's restriction), we call these connectives {\em iterative divisions.} 
Independently from Sedl\'ar, such connectives were introduced in~\cite{KuznetsovRyzhkova2020}. The system considered there is associative, but still has Lambek's restriction, so Kleene plus is used instead of Kleene star. In~\cite{KuznetsovRyzhkova2020} it was proved that the Lambek calculus $\LL$ extended with meet and iterative divisions is strongly complete w.r.t.\ the class of all R-models. In this paper, we shall prove a weak counterpart of that result for the system without Lambek's restriction.

An infinitary proof system for the Lambek calculus with Kleene star, or infinitary action logic, was introduced by Buszkowski and Palka~\cite{Palka2007,BuszkowskiPalka2008}. We present a version of this system for iterative divisions, following~\cite{KuznetsovRyzhkova2020}:
\[
 \infer[*{\BS} L,\ n \ge 0]
 {\Gamma, \Pi_1, \ldots, \Pi_n, A^* \BS B, \Delta \to C}
 {\Pi_1 \to A & \ldots & \Pi_n \to A & \Gamma, B, \Delta \to C}
 \qquad
 \infer[*{\BS} R]
 {\Pi \to A^* \BS B}
 {\bigl( A^n, \Pi \to B \bigr)_{n=0}^{\infty}}
\]
\[
 \infer[{\SL}{*} L,\ n \ge 0]
 {\Gamma, B \SL A^*, \Pi_1, \ldots, \Pi_n, \Delta \to C}
 {\Pi_1 \to A & \ldots & \Pi_n \to A & \Gamma, B, \Delta \to C}
 \qquad
 \infer[{\SL}{*} R]
 {\Pi \to B \SL A^*}
 {\bigl( \Pi, A^n \to B \bigr)_{n=0}^{\infty}}
\]
The system obtained by adding these rules to $\LEwuz$ will be denoted by~$\LEwuzK$. 
A version of this system with Lambek's restriction is undecidable (namely, $\Pi^0_1$-complete)~\cite{KuznetsovRyzhkova2020}. For $\LEwuzK$, we also conjecture $\Pi^0_1$-completeness, thus using infinitary proof machinery (omega-rules or similar) becomes inevitable.

In square R-models, the Kleene star is interpreted as the reflexive-transitive closure operation:
\(
 v(A^*) = (v(A))^* \).
 Thus, the interpretation of iterative divisions is as follows: 
\(
 v(A^* \BS B) = (v(A))^* \BS v(B) \) and \( v(B \SL A^*) = v(B) \SL (v(A))^*\).

We extend the notion of non-standard square R-model with the unit (Definition~\ref{Df:RmodNS}) with this interpretation for iterative divisions. A routine check provides strong soundness. Notice that the usage of $\delta$ in the interpretation of the Kleene star does not conflict with the non-standard unit $\UA$, since they are equivalent in the denominator:
$R \SL \delta = R = R \SL \UA$ (see the proof of Lemma~\ref{Lm:deltaunit}).

Below we prove weak completeness. The strong one fails by Theorem~\ref{Th:counterexample}.
The reduct to the language without the unit yields ``standard'' square R-models with iterative divisions, thus we get soundness and weak completeness for them also.

\begin{thm}\label{Th:itdiv}
 If a sequent in the language with iterative divisions is true in all non-standard square R-models, then it is derivable in $\LEwuzK$.
\end{thm}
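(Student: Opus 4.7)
The plan is to re-use the graph $G$ from Lemma~\ref{Lm:graph} verbatim, reinterpreting $\vdash$ as derivability in $\LEwuzK$ and enumerating all formulae of the enlarged language (now including those of the form $A^*\BS B$ and $B\SL A^*$) in the schedule functions $\sigma$ and $\varsigma$. The proof of Lemma~\ref{Lm:graph} carries over unchanged: it only invokes $\LEwuz$-derivable facts (associativity, $A\cdot\One\to A$) together with cut elimination, and cut elimination for $\LEwuzK$ is available via a routine extension of the Buszkowski--Palka argument for infinitary action logic. I then build the universal model $\Md_0^{\Af}=(W,\Af,\UA,\ZA,v)$ with $W=V$, $v(F)=\{(x,y)\in E\mid\ \vdash\ell(x,y)\to F\}$, $\Af=\{v(F)\mid F\in\Fm\}$, $\UA=v(\One)$, $\ZA=v(\Zero)$. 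The verifications that $v$ respects $\cdot,\BS,\SL,\wedge,\One,\Zero$ are copied verbatim from the proof of Theorem~\ref{Th:weak}; the only genuinely new obligations are
\[
v(A^*\BS B)=(v(A))^*\BS v(B), \qquad v(B\SL A^*)=v(B)\SL(v(A))^*,
\]
the second being symmetric to the first via property~\ref{It:Asucc} in place of~\ref{It:Aprev}.

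For the $\subseteq$ inclusion in the first identity, take $(y,z)\in v(A^*\BS B)$ and any $(x,y)\in(v(A))^*$, which gives a path $x=x_0,x_1,\ldots,x_n=y$ with $\vdash\ell(x_{i-1},x_i)\to A$. Transitivity of $E$ together with iterated property~\ref{It:cdot1} yields $(x,z)\in E$ and a derivation of $\ell(x,z)\to\ell(x_0,x_1)\cdot\ldots\cdot\ell(x_{n-1},x_n)\cdot\ell(y,z)$; applying $*{\BS}L$ to $\ell(y,z)\to A^*\BS B$ with side premises $\ell(x_{i-1},x_i)\to A$, followed by $\cdot L$ and a cut, produces $\vdash\ell(x,z)\to B$, i.e.\ $(x,z)\in v(B)$.

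For the $\supseteq$ inclusion, assume $(y,z)\in(v(A))^*\BS v(B)$. Since $(y,y)\in\delta\subseteq(v(A))^*$, the hypothesis gives $(y,z)\in v(B)$, hence $(y,z)\in E$ and $\vdash\ell(y,z)\to B$. To derive $\ell(y,z)\to A^*\BS B$, apply $*{\BS}R$: it suffices to produce $\vdash A^n,\ell(y,z)\to B$ for every $n\ge 0$. The case $n=0$ is the derivation of $\ell(y,z)\to B$ just obtained. For $n\ge 1$, iterate property~\ref{It:Aprev} $n$ times starting at $y$, each time with formula $A$, obtaining a chain $x_0,x_1,\ldots,x_n=y$ of vertices with $\ell(x_{i-1},x_i)=A\cdot\One$ (so each $(x_{i-1},x_i)\in v(A)$) and $\ell(x_0,z)=A\cdot\ell(x_1,z)=\ldots=A^n\cdot\ell(y,z)$ up to associativity. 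Then $(x_0,y)\in(v(A))^*$, so $(x_0,z)\in v(B)$ by hypothesis, giving $\vdash A^n\cdot\ell(y,z)\to B$ and hence $\vdash A^n,\ell(y,z)\to B$ by $\cdot L$. The main technical point is exactly this iteration: at each step one must track the form $\ell(x_{i-1},z')=A\cdot\ell(x_i,z')$ guaranteed by property~\ref{It:Aprev} so that the intermediate labels assemble to $A^n\cdot\ell(y,z)$ modulo associativity. With both identities in place, weak completeness follows exactly as in Theorem~\ref{Th:weak}: a sequent $\Lambda\to B$ true in $\Md_0^{\Af}$ gives $(x,x)\in v(B)$ for any $x\in V$, hence $\vdash\One\to B$, and a cut with the axiom $\Lambda\to\One$ yields $\vdash\Lambda\to B$.
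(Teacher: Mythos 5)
Your proposal is correct and follows essentially the same route as the paper: reuse the graph construction of Lemma~\ref{Lm:graph} over the enlarged formula language, verify the two new valuation identities for iterative divisions using iterated applications of properties~\ref{It:cdot1} and~\ref{It:Aprev} together with the rules ${*}{\BS}L$ and the $\omega$-rule ${*}{\BS}R$, and conclude via the universal model as in Theorem~\ref{Th:weak}. The only cosmetic differences are your chain indexing in the $\supseteq$ direction and your feeding the labels $\ell(x_{i-1},x_i)$ directly into ${*}{\BS}L$ rather than first collapsing them to $A^n$; both are equivalent to the paper's argument.
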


\begin{proof}
 This extension of Theorem~\ref{Th:weak} is proved in the same way as we do in~\cite{KuznetsovRyzhkova2020} for the case with Lambek's restriction. First, in Lemma~\ref{Lm:graph} we replace the set $\Fm$ of formulae used as labels by the one with iterative divisions. Thus, we get a new labelled graph $G$ using the same step-by-step construction (that is, we do not need to re-prove Lemma~\ref{Lm:graph}). Next, the only thing we need to modify in the proof of Theorem~\ref{Th:weak} is to add one more case, iterative division, in the check that $\Md_0^{\Af}$ is a well-defined model. Everything else remains the same.
 
 Thus, we have to prove that $v(A^* \BS B) = (v(A))^* \BS v(B)$ and $v(B \SL A^*) = v(B) \SL (v(A))^*$. We shall prove only the former, since the latter is symmetric. Let us first establish the $\subseteq$ inclusion.
 
  Suppose that $(y,z) \in v(A^* \BS B)$ and take an arbitrary $x \in W$ such that $(x,y) \in (v(A))^*$. Our aim is to show that $(x,z) \in v(B)$. The statement $(x,y) \in (v(A))^*$ means that there exists a number $n \geq 0$ and a sequence $x_0, x_1, \ldots, x_n \in W$ such that $x_0 = x$, $x_n = y$, and $(x_{i-1}, x_i) \in v(A)$ for each $i = 1,\ldots,n$. In particular, if $n = 0$, then we have $x = y$. For $n > 0$, we iterate property~\ref{It:cdot1} of $G$ and get $\vdash \ell(x,y) \to A^n$ and proceed as follows:
 \[
  \infer[Cut]
  {\ell(x,z) \to B}
  {\ell(x,z) \to \ell(x,y) \cdot \ell(y,z) & \infer[\cdot L]{\ell(x,y) \cdot \ell(y,z) \to B}
  {\infer[Cut]{\ell(x,y), \ell(y,z) \to B}{\ell(x,y) \to A^n & \infer[Cut]{A^n, \ell(y,z) \to B}{\ell(y,z) \to A^* \BS B & \infer[{*}{\BS}L]{A^n, A^* \BS B \to B}{\overbrace{A \to A\ \ldots}^{\text{$n$ times}}  & B \to B}}}}}
 \]
In the case of $n=0$ we have $\ell(x,z) = \ell(y,z)$, and the sequent $\ell(y,z) \to B$ is derived using cut with $\ell(y,z) \to A^* \BS B$ and the ${*}{\BS}L$ rule with $n = 0$.

Now let us establish the $\supseteq$ inclusion. Suppose that $(y,z) \in (v(A))^* \BS v(B)$. We need to show that $(y,z) \in v(A^* \BS B)$, that is, $\vdash \ell(y,z) \to A^* \BS B$. The latter is derived using the omega-rule ${*}{\BS}R$ from the infinite series of sequents $\bigl( A^n, \ell(y,z) \to B \bigr)_{n=0}^{\infty}$. For $n = 0$,  take $\delta \in (v(A))^*$ and conclude that $(y,z) \in v(B)$, thus, $\vdash \ell(y,z) \to B$. For $n > 0$, we iterate property~\ref{It:Aprev} of $G$ and construct a sequence  $x_0, x_1, \ldots, x_n$ such that $x_0 = y$, $(x_{i+1}, x_i) \in E$ and $\ell(x_{i+1},t) = A \cdot \ell(x_i,t)$ for any $t$ such that $(x_i,t) \in E$.
Having $\ell(x_0,z) = \ell(y,z)$ and $\ell(x_0,y) = \ell(y,y) = \One$, by induction we get $\ell(x_n,z) = A^n \cdot \ell(y,z)$ and $\ell(x_n,y) = A^n \cdot \One$. 

The latter yields $(x_n,y) \in v(A^n) = v(A) \circ \ldots \circ v(A) \subseteq (v(A))^*$. Thus, since $(y,z) \in (v(A))^* \BS v(B)$, we have $(x_n,z) \in v(B)$, that is, $\vdash \ell(x_n,z) \to B$.
Now the derivation of $A^n, \ell(y,z) \to B$ is as follows:
\[
 \infer[Cut]
 {A^n, \ell(y,z) \to B}
 {\infer[\cdot R]{A^n,\ell(y,z) \to \ell(x_n,z)}{A^n \to A^n & \ell(y,z) \to \ell(y,z)} & \ell(x_n,z)  \to B} \qedhere
\]
\end{proof}

\section{The Exponential Modality and Strong Conservativity}\label{S:exponential}

In the next section, we are going to prove a {\em strong} completeness result for  the product-free fragment of $\LEwuz$. This fragment will be denoted by $\LEDwuz$ and, as defined in Section~\ref{S:calculus}, it is obtained from $\LEwuz$ by simply removing rules for product, $\cdot L$ and $\cdot R$. Another interesting fragment is $\LEDw$, which also lacks axioms and rules for constants: $\One L$, $\One R$, $\Zero L$.

However, inside the proofs in Section~\ref{S:strong} {\em we  are going to use formulae with multiplication.} This makes {\em conservativity} an acute question. Otherwise, we would prove completeness not for $\LEDwuz$, but for a potentially bigger system in which, despite the set of hypotheses $\Hc$ and the goal sequent $\Pi \to B$ are product-free, the derivations are allowed to use product. So, let us formulate and prove the strong conservativity statement.

\begin{thm}\label{Th:strongconserv}
 If $\Hc$ and $\Pi \to B$ are in the product-free language (i.e., constructed using $\BS$, $\SL$, $\wedge$, $\Zero$, and $\One$) and $\Pi \to B$ syntactically follows from $\Hc$ in $\LEwuz$, then it also does so in $\LEDwuz$. Moreover, if $\Hc$ and $\Pi \to B$ do not include constants~$\Zero$ and~$\One$, then $\Pi \to B$ syntactically follows from $\Hc$ in $\LEDw$.
\end{thm}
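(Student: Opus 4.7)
Following the hint in the introduction, the strategy is to derive strong conservativity via a modalised deduction theorem in an extension by an exponential modality, in the spirit of~\cite{LMSS}. Let $\eLEwuz$ denote the extension of $\LEwuz$ by a connective $!$ with the usual linear-logic rules: a $!$-left introduction (an antecedent formula $A$ may be replaced by $!A$), $!$-weakening, $!$-contraction, and $!$-permutation allowing $!$-formulae to commute past arbitrary neighbours, so that $!$-formulae form a commutative zone inside otherwise non-commutative antecedents. For each hypothesis $h = (A_1, \ldots, A_k \to C) \in \Hc$ set
\[
F_h \;:=\; (\ldots ((C \SL A_k) \SL A_{k-1}) \ldots ) \SL A_1,
\]
which is product-free if $h$ is, and lacks $\Zero, \One$ if $h$ does. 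Iterated $\SL R$ yields $\{h\} \vdash_{\LEDwuz} \Lambda \to F_h$, and iterated $\SL L$ with $Cut$ yields $\{\Lambda \to F_h\} \vdash_{\LEDwuz} h$.

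The first main step is a modalised deduction theorem: for each $L \in \{\LEwuz, \LEDwuz, \LEDw\}$, writing $\boldsymbol{!}L$ for its exponential extension,
\[
\Hc \vdash_L \Pi \to B \ \Longleftrightarrow\ \exists h_1,\ldots,h_n \in \Hc :\ \vdash_{\boldsymbol{!}L}\ !F_{h_1}, \ldots, !F_{h_n}, \Pi \to B.
\]
For $(\Rightarrow)$ I would induct on the $L$-derivation, showing that the claim holds at every sequent $\Sigma \to D$: an axiom instance of $h_j$ is handled by iterated $\SL L$ producing $F_{h_j}, A_1, \ldots, A_k \to C$, followed by $!$-left introduction on $F_{h_j}$ and $!$-weakening for the other $!F_{h_i}$; the other rules pass through the $!$-prefix by $!$-permutation (stepping aside for rules acting on leftmost formulae) and $!$-contraction (merging the prefix across two-premise rules). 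For $(\Leftarrow)$ one traces the $\boldsymbol{!}L$-proof directly: every $!$-structural step becomes a no-op in $L$, and every $!$-left introduction converting $F_{h_i}$ into $!F_{h_i}$ is simulated in $L$ by $Cut$ with $\Lambda \to F_{h_i}$ (derivable from $h_i$), which deletes $F_{h_i}$ from the antecedent.

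The second main step is cut elimination for $\eLEwuz$, by the standard Lambek-style nested induction on cut-formula complexity and proof height, with the new case (cut against a $!$-contracted formula) handled by the outer induction of~\cite{LMSS} on the number of $!$-marked cut-formulae. The theorem then falls out: if $\Hc$ and $\Pi \to B$ are product-free and $\Hc \vdash_{\LEwuz} \Pi \to B$, then by $(\Rightarrow)$ and cut elimination there is a cut-free $\eLEwuz$-proof of $!F_{h_1}, \ldots, !F_{h_n}, \Pi \to B$; its endsequent is product-free (under $!$), so by the subformula property the entire proof uses only product-free formulae and is already a $\boldsymbol{!}\LEDwuz$-proof, whence $(\Leftarrow)$ for $\LEDwuz$ gives $\Hc \vdash_{\LEDwuz} \Pi \to B$. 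The constant-free refinement is identical: $\Zero L, \One L, \One R$ each have a principal formula containing $\Zero$ or $\One$, so by subformulae they cannot appear in a cut-free proof whose endsequent avoids these constants. The chief technical obstacle is the cut-elimination argument for $\eLEwuz$, specifically reconciling the commutativity of the $!$-zone with the non-commutative product and division rules; the standard resolution, as in non-commutative linear logic, is to keep $!$-formulae in a dedicated commutative antecedent zone.
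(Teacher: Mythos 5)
Your proposal is correct and follows essentially the same route as the paper: internalise the (finitely many used) hypotheses as $!$-marked formulae in the antecedent via a modalised deduction theorem for the exponential extension $\eLEwuz$, then use cut elimination and the subformula property of the cut-free proof to discard product (and, in the refined case, the constants), translating $!$-left steps back into cuts with the hypotheses. The only cosmetic difference is that the paper first normalises each hypothesis to an empty-antecedent sequent $\Lambda \to A$ and uses $!A$ directly, whereas you package the hypothesis as a nested $\SL$-formula $F_h$ — these are interchangeable.
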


Wishing to use cut elimination and the subformula property, we are going to {\em internalise} sequents from $\Hc$ into the goal sequent using a modalised version of the {\em deduction theorem.} (The deduction theorem itself does not hold for $\LEwuz$ due to the substructural nature of the calculus.) The modality which will be used for this purpose is the {\em exponential} borrowed from linear logic, see~\cite{Girard1987,LMSS,deGroote2005}. The exponential extension of $\LEwuz$ will be denoted by $\eLEwuz$.

The new system $\eLEwuz$ is obtained from $\LEwuz$ in the following way. The language of formulae is extended by the exponential as a unary connective, written in the prefix form: ${!}A$. The set of inference rules are extended by the following rules for the exponential:
\[
 \infer[{!}L]
 {\Gamma, {!}A, \Delta \yields C}
 {\Gamma, A, \Delta \yields C}
 \qquad
 \infer[{!}R]
 {{!}A_1, \ldots, {!}A_n \yields {!}B}
 {{!}A_1, \ldots, {!}A_n \yields B}
 \qquad
 \infer[{!}W]
 {\Gamma, {!}A, \Delta \yields C}
 {\Gamma, \Delta \yields C}
\]
\[
 \infer[{!}P_1]
 {\Gamma, {!}A, \Phi, \Delta \yields C}
 {\Gamma, \Phi, {!}A, \Delta \yields C}
 \qquad
 \infer[{!}P_2]
 {\Gamma, \Phi, {!}A, \Delta \yields C}
 {\Gamma, {!}A, \Phi, \Delta \yields C}
 \qquad
 \infer[{!}C]
 {\Gamma, {!}A, \Delta \yields C}
 {\Gamma, {!}A, {!}A, \Delta \yields C}
\]
Notice that ${!}$ allows all structural rules, namely weakening, permutation, and contraction. In other words, formulae of the form ${!}A$ behave like intuitionistic formulae rather than Lambek ones. This will allow reducing derivability from hypotheses in $\LEwuz$ to pure derivability in $\eLEwuz$.

The key feature of pure derivability in $\eLEwuz$ is {\em cut elimination:} any sequent derivable in $\eLEwuz$ is derivable without using $Cut$. The proof of cut elimination is a non-commutative variant of cut elimination in linear logic. For an accurate presentation of this proof, see~\cite{KKNS2019MSCS}. Our system $\eLEwuz$ is a fragment of the $\mathbf{SMALC}_\Sigma$ calculus considered in~\cite{KKNS2019MSCS}.

Now let us return to proving Theorem~\ref{Th:strongconserv}. For simplicity, we transform sequents in $\Hc$ into sequents with empty antecedents. Indeed, replacing $A_1, \ldots, A_n \yields B$ with $\yields A_n \BS \ldots (A_{n-1} \BS \ldots \BS (A_1 \BS B) \ldots)$ (as we already did in the proof of Theorem~\ref{Th:weak}) in $\Hc$ does not change derivability from $\Hc$. Moreover, this holds for each system considered, since the rules used here are only $Cut$, $\BS R$, and $\BS L$. Thus, from this point $\Hc$ consists of sequents with empty antecedents.

\begin{rem}
 It is important to notice that this construction, which makes antecedents empty, does not use multiplication. Thus, it can be performed in the narrower, product-free calculi $\LEDw$ and $\LEDwuz$.
\end{rem}

For a {\em finite} set of hypotheses $\Hc' = \{  {\to A_1}, \ldots, {\to A_n} \}$, let us define the following sequence of formulae in the language of $\eLEwuz$:
\[
 {!}\Hc' = \{ {!}A_1, \ldots, {!}A_n \}.
\]
The order of $A_i$'s here is arbitrary, since ${!}\Hc'$ will be used in antecedents where permutation rules are available for ${!}$-formulae.

Let us formulate and prove the version of deduction theorem we need for strong conservativity. The idea here is not new, and it goes back to~\cite{LMSS,KKNS2019MSCS}.

\begin{lem}\label{Lm:ded}
 Let $\Hc$ and $\Pi \yields B$ be, resp., a set of sequents (possibly infinite) and a sequent in the product-free language. Then the following holds.
 \begin{enumerate}
  \item\label{It:dedforward} If\/ $\Pi \yields B$ syntactically follows from $\Hc$ in $\LEwuz$, then there exists a finite $\Hc' \subseteq \Hc$ such that the following sequent is derivable in $\eLEwuz$:
  \[
   {!}\Hc', \Pi \yields B.
  \]
  \item\label{It:dedbackward} If the sequent ${!}\Hc', \Pi \yields B$ is derivable in $\eLEwuz$, then $\Pi \yields B$ syntactically follows from $\Hc'$ (and, therefore, from $\Hc$) in $\LEDwuz$.
  \item\label{It:dedbackward2} Moreover, if $\Hc$ and $\Pi \yields B$ do not contain constants, then 
  $\Pi \yields B$ syntactically follows from $\Hc'$ already in $\LEDw$.
 \end{enumerate}

\end{lem}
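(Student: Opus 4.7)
The plan is to prove the three parts in turn, using induction on derivations and cut elimination in $\eLEwuz$.

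For part~\ref{It:dedforward}, I would proceed by induction on the derivation of $\Pi \to B$ from $\Hc$ in $\LEwuz$. The base cases are: logical axioms $A \to A$, handled with empty $\Hc'$; and non-logical axioms $\Lambda \to A_i \in \Hc$, for which one takes $\Hc' = \{\Lambda \to A_i\}$ and derives ${!}A_i \to A_i$ by applying ${!}L$ to $A_i \to A_i$. For the inductive step, suppose the last rule has premises yielding (by IH) sequents of the form ${!}\Hc'_j, \Pi_j \to B_j$. Setting $\Hc' = \bigcup_j \Hc'_j$, one uses ${!}W$ to enlarge each ${!}\Hc'_j$ to ${!}\Hc'$, then uses the permutation rules ${!}P_1$, ${!}P_2$ to rearrange ${!}$-formulas to the leftmost position, and then applies the same rule that was used in $\LEwuz$. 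For rules like $\BS R$ and $\SL R$, which require the cut formula at the boundary of the antecedent, one first permutes ${!}\Hc'$ past $A$ to obtain $A, {!}\Hc', \Pi \to B$ before applying $\BS R$. The $Cut$ rule (if used) introduces duplicates of ${!}\Hc'$ which one merges via ${!}C$. Since derivations are finite, only finitely many hypotheses from $\Hc$ are collected.

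For part~\ref{It:dedbackward}, I would first invoke cut elimination for $\eLEwuz$ (which holds as a known result, cf.~\cite{KKNS2019MSCS}) to assume a cut-free derivation of ${!}\Hc', \Pi \to B$. Since $\Hc'$ and $\Pi \to B$ are product-free, the subformula property forces every formula in the derivation to be a subformula of those in ${!}\Hc'$ or $\Pi \to B$; in particular, the only ${!}$-formulas appearing are exactly those of the form ${!}A_i$ for $A_i$ occurring among the hypotheses, and the rule ${!}R$ cannot appear (as no ${!}$-formula is a subformula of $B$). Now define a translation on sequents that erases all occurrences of ${!}$-formulas from antecedents, and translate the cut-free derivation inductively. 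Structural ${!}$-rules (namely ${!}W$, ${!}P_1$, ${!}P_2$, ${!}C$) become trivial on the erased sequents, since premise and conclusion differ only in ${!}$-formulas and therefore erase to the same sequent. The interesting rule is ${!}L$, whose premise $\Gamma, A_i, \Delta \to C$ translates to an erased sequent $\Gamma', A_i, \Delta' \to C$ derivable from $\Hc'$ by IH; cutting this against the hypothesis $\Lambda \to A_i \in \Hc'$ yields the erased conclusion $\Gamma', \Delta' \to C$. All other rules translate by applying the same rule to the translated premises, since they do not involve the exponential and their principal formula is product-free.

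For part~\ref{It:dedbackward2}, if $\Hc$ and $\Pi \to B$ additionally contain no constants $\Zero$ or $\One$, the subformula property of the cut-free derivation in $\eLEwuz$ ensures that neither $\Zero L$, $\One L$, nor $\One R$ is used, so the translation lands in $\LEDw$ rather than $\LEDwuz$. The main obstacle is the careful invocation of cut elimination in $\eLEwuz$ together with the subformula analysis, which is what makes the translation (and in particular the absence of ${!}R$) work; this is the only essential step where we cannot avoid going through the exponential system, and it is the reason why product, though absent from $\Hc$ and $\Pi \to B$, must nonetheless be handled inside $\eLEwuz$'s cut-elimination proof machinery.
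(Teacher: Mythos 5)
Your proposal is correct and follows essentially the same route as the paper: induction on the derivation for part~(1), then cut elimination plus the subformula property and erasure of ${!}$-formulae (turning ${!}L$ into a cut against the corresponding hypothesis) for parts~(2) and~(3). The only cosmetic difference is that you use ${!}W$ to pad the premises of multi-premise rules before combining them, whereas the paper takes unions of the $\Hc'_j$ and contracts duplicates afterwards, avoiding weakening entirely --- which is what lets the paper remark afterwards that the relevant (weakening-free) modality already suffices.
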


\begin{proof}
 For~(\ref{It:dedforward}), we proceed by induction on the derivation of $\Pi \to B$ from $\Hc$. For axioms $Id$, $\One R$, and $\Zero R$, we take $\Hc' = \varnothing$ and get just the same axiom. For a sequent of the form ${\yields A}$ from $\Hc$, which can be also used as an axiom, we take $\Hc' = \{ {\yields A} \}$ and obtain ${!}A \yields A$, which is derivable using ${!}L$. For one-premise rule, we just keep the same ${!}\Hc'$, possibly moving it using ${!}P_2$ (this is necessary for the $\BS R$ rule). Finally, for two-premise rules we have two subsets of $\Hc$, which we denote by $\Hc'_1$ and $\Hc'_2$.  The new $\Hc'$ is their union. We apply the same rule as in the original derivation, moving the ${!}$-formulae using permutation rules and in the end contracting duplicated formulae of the form ${!}A$ where $A \in \Hc'_1 \cap \Hc'_2$ (if any). In particular, this also happens for $Cut$ (which is a two-premise rule), so the resulting derivation in $\eLEwuz$ could include cuts. However, as noticed above, $Cut$ can be eliminated. Since the original derivation was finite, $\Hc'$ is also always finite. Namely, it includes all sequents from $\Hc$ actually used at least once.
 
 Statements~(\ref{It:dedbackward}) and~(\ref{It:dedbackward2}) are proved by the same argument. Consider a cut-free derivation of ${!}\Hc', \Pi \yields B$ in $\eLEwuz$. Cut-free derivations enjoy the subformula property, whence this derivation does not include rules for product, $\cdot L$ and $\cdot R$. For~(\ref{It:dedbackward2}), it also does not include axioms and rules for constants: $\One L$, $\One R$, $\Zero L$. Now let us erase all formulae of the form ${!}A$ from this derivation. This trivialises all rules for ${!}$, except ${!}L$ (the ${!}R$ rule is never used, since $!$-formulae never appear in succedents). For ${!}L$, we have $A \in \Hc'$, whence it translates to the following application of $Cut$:
 \[
  \infer[Cut]
  {\Gamma, \Delta \yields C}
  {{} \yields A & \Gamma, A, \Delta \yields C}
 \]
 (the instance of ${!}A$ in the conclusion was erased).  All other rules are translated directly to the corresponding rules of $\LEDwuz$ (for~(\ref{It:dedbackward}) or
 $\LEDw$ (for~(\ref{It:dedbackward2}). This gives the desired derivation of $\Pi \to B$ from $\Hc$.
\end{proof}

Notice that this argument does not work for infinitary systems, like our system with iterated divisions $\LEwuzK$. For this system, we leave strong conservativity an open question, as well as strong completeness of the product-free fragment (see next section).

\begin{rem}
 In fact, the weakening rule ${!}W$ was never used in the proof of Lemma~\ref{Lm:ded}. Therefore, the full-power exponential modality here may be replaced by the so-called {\em relevant} modality~\cite{KKS2016FG}, which has the same set of rules as the exponential, except ${!}W$. Indeed, our $\Hc'$ set collects exactly those sequents from $\Hc$ which are really used for deriving $\Pi \yields B$, thus, we are talking about entailment in the sense of relevant logic rather than, say, intuitionistic one.
\end{rem}

Now we are ready to prove strong conservativity.

\begin{proof}[Proof of Theorem~\ref{Th:strongconserv}]
 Let $\Pi \yields B$ semantically follow from $\Hc$ in $\LEwuz$. Then by statement (\ref{It:dedforward}) of  Lemma~\ref{Lm:ded}, the sequent ${!}\Hc', \Pi \yields B$ is derivable in $\eLEwuz$ for some finite $\Hc' \subseteq \Hc$. By statement (\ref{It:dedbackward}), $\Pi \yields B$ is derivable from $\Hc$ in the smaller system $\LEDwuz$. In the case where $\Hc$ and $\Pi \yields B$ also do not contain constants $\Zero$ and $\One$, we use statement (\ref{It:dedbackward2}) and get derivability of $\Pi \yields B$ from $\Hc$ in $\LEDw$.
\end{proof}

\section{Strong Completeness without Product}\label{S:strong}

In this section we are going to prove strong completeness results for the product-free fragments of $\LEwuz$ and $\LEw$. For the first one, denoted by $\LEDwuz$, strong completeness will be proved for a {\em modified} class of non-standard square R-models, which we call {\em product-free non-standard square R-models.} This class could look {\em ad hoc} and artificial, but using such models will allow proving strong completeness of $\LEDw$ (the product-free calculus without constants~$\Zero$ and~$\One$) w.r.t.\ square R-models in the standard sense.

In order to define product-free non-standard square R-models, we first relax the notions of $\Af$-unit and $\Af$-zero (Definition~\ref{Df:Aunit}) by allowing $\Af$ not to be closed under composition. (At the same time, $\Af$ should still be closed under $\BS$, $\SL$, $\cap$.) The definition of $\UA$ and $\ZA$ remains the same: $\UA \circ R = R \circ \UA = R$ and $\ZA \subseteq R$ for any $R \in \Af$. Uniqueness of $\UA$ and $\ZA$ (if they exist) still holds. 

As for properties of $\ZA$ and $\UA$, the situation is trickier. Lemma~\ref{Lm:deltaunit} still holds, with the same proof. (Note that Remark~\ref{Rem:outside} also applies here.) As for Lemma~\ref{Lm:zerozero}, we prove only one inclusion:

\begin{lem}\label{Lm:zerozero2}
 If $\ZA$ is the $\Af$-zero, where $\Af$ is not necessarily closed under composition, then $\ZA \circ R \subseteq \ZA$ and $R \circ \ZA \subseteq \ZA$ for any $R \in \Af$.
\end{lem}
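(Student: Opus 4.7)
The plan is to revisit the proof of Lemma~\ref{Lm:zerozero} and extract the part that does not rely on closure of $\Af$ under composition. That proof had two halves: the inclusion $\ZA \subseteq \ZA \circ R$ used $\ZA \circ R \in \Af$, which now fails, whereas the inclusion $\ZA \circ R \subseteq \ZA$ was obtained from $\ZA \SL R \in \Af$ via residuation. Since we still assume $\Af$ closed under $\BS$ and $\SL$, this second half goes through verbatim, and that is exactly what the weakened lemma asserts.

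Concretely, for the first inclusion I would argue as follows. The relation $\ZA \SL R$ belongs to $\Af$ by closure under $\SL$. By the defining property of the $\Af$-zero, $\ZA \subseteq \ZA \SL R$. Now residuation of composition and right division is a general fact about binary relations on $W$ (independent of the family $\Af$): for any relations $P, Q, S$ on $W$, $P \subseteq Q \SL S$ iff $P \circ S \subseteq Q$. Applying this with $P = Q = \ZA$ and $S = R$ gives $\ZA \circ R \subseteq \ZA$.

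The second inclusion $R \circ \ZA \subseteq \ZA$ is entirely symmetric: since $\Af$ is closed under $\BS$, we have $R \BS \ZA \in \Af$, hence $\ZA \subseteq R \BS \ZA$, which by the dual residuation law is equivalent to $R \circ \ZA \subseteq \ZA$.

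There is no real obstacle here, as the proof is essentially half of the earlier one. The only point worth flagging is what is \emph{not} claimed: without closure of $\Af$ under $\circ$, the opposite inclusions $\ZA \subseteq \ZA \circ R$ and $\ZA \subseteq R \circ \ZA$ can fail, so $\ZA$ need no longer act as a two-sided zero for composition, only as an absorbing lower bound on one side of each composite. This is precisely why the weaker statement is the right one to record for the product-free setting introduced in this section.
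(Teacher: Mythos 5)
Your proof is correct and is essentially the paper's own argument: the paper's proof of this lemma literally reads ``exactly as the corresponding part of the proof of Lemma~\ref{Lm:zerozero}'', namely the residuation step $\ZA \subseteq \ZA \SL R \in \Af$ giving $\ZA \circ R \subseteq \ZA$ (and symmetrically with $R \BS \ZA$), which is precisely what you spelled out. Your closing remark about which inclusions are lost without closure under $\circ$ also matches the paper's later observation that, e.g., $\Zero \yields \Zero \cdot r$ is no longer validated.
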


\begin{proof}
 Exactly as the corresponding part of the proof of Lemma~\ref{Lm:zerozero}. 
\end{proof}

\begin{cor}\label{Cor:zerozero2}
 If $\ZA$ is the $\Af$-zero, where $\Af$ is not necessarily closed under composition,
 then $R_1 \circ \ldots \circ R_m \circ \ZA \circ S_1 \circ \ldots \circ S_k \subseteq \ZA$ for any $R_1, \ldots, R_m, S_1, \ldots, S_k \in \Af$.
\end{cor}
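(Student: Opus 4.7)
The plan is to reduce the claim to iterated applications of Lemma~\ref{Lm:zerozero2}, using associativity of relation composition and the elementary monotonicity fact that $A \subseteq B$ entails $C \circ A \circ D \subseteq C \circ B \circ D$ for any relations $C, D$. The point is that Lemma~\ref{Lm:zerozero2} only requires one factor from $\Af$ at a time, so we never actually need $\Af$ to be closed under composition — we absorb the $R_i$'s and $S_j$'s one by one.

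First I would absorb the factors on the right by induction on $k$. The base case $\ZA \circ S_1 \subseteq \ZA$ is immediate from Lemma~\ref{Lm:zerozero2}. For the inductive step, assuming $\ZA \circ S_1 \circ \ldots \circ S_j \subseteq \ZA$, monotonicity gives $\ZA \circ S_1 \circ \ldots \circ S_j \circ S_{j+1} \subseteq \ZA \circ S_{j+1}$, and another application of Lemma~\ref{Lm:zerozero2} bounds this by $\ZA$. Symmetrically (by induction on $m$ using the other half of Lemma~\ref{Lm:zerozero2}), we get $R_1 \circ \ldots \circ R_m \circ \ZA \subseteq \ZA$.

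To finish, chain the two inclusions via monotonicity:
\[
R_1 \circ \ldots \circ R_m \circ \ZA \circ S_1 \circ \ldots \circ S_k \subseteq R_1 \circ \ldots \circ R_m \circ \ZA \subseteq \ZA.
\]
I do not expect any genuine obstacle: the corollary is essentially a two-line induction from Lemma~\ref{Lm:zerozero2}, and the only conceptual remark worth making explicit is why the lack of closure of $\Af$ under $\circ$ is harmless here.
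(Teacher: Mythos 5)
Your argument is correct and is exactly the paper's proof, which simply says ``By Lemma~\ref{Lm:zerozero2}, using monotonicity of $\circ$ w.r.t.\ $\subseteq$''; you have merely spelled out the implicit induction. No issues.
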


\begin{proof}
 By Lemma~\ref{Lm:zerozero2}, using monotonicity of $\circ$ w.r.t.\ $\subseteq$.
\end{proof}

The definition of our modified models, product-free non-standard square R-models, is the same as Definition~\ref{Df:RmodNS}, but $\Af$ is not required to be closed under composition, and the $\Af$-unit and $\Af$-zero are understood also in this relaxed version of definition.

\begin{prop}
 The calculus $\LEDwuz$ is strongly sound w.r.t.\ the class of product-free non-standard square R-models.
\end{prop}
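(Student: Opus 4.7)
The plan is to adapt the proof of Proposition~\ref{Prop:sound} by induction on the derivation, the key observation being that in the product-free setting the relaxed definition of $\Af$ (no longer closed under $\circ$) causes no trouble. In $\LEDwuz$ the valuation $v$ only needs to land in $\Af$ on variables, constants, and formulae built using $\BS$, $\SL$, $\wedge$, and $\Af$ is still assumed to be closed under these three operations. Truth of a sequent $A_1,\ldots,A_n \to B$ is, by Definition~\ref{Df:truth}, the inclusion $v(A_1) \circ \cdots \circ v(A_n) \subseteq v(B)$ computed inside $\Pc(W \times W)$; it is irrelevant whether the composition on the left-hand side itself belongs to $\Af$.

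For the rules $Id$, $Cut$, $\BS L$, $\BS R$, $\SL L$, $\SL R$, $\wedge L_1$, $\wedge L_2$, and $\wedge R$, I would simply copy the standard relational soundness argument: each step uses only monotonicity of $\circ$ w.r.t.~$\subseteq$ together with the residuation inequalities $v(A) \circ (v(A) \BSU v(B)) \subseteq v(B)$ and $(v(B) \SLU v(A)) \circ v(A) \subseteq v(B)$, both of which follow from Definition~\ref{Df:Rmod} with $U = W \times W$ and do not require $\Af$-closure under composition.

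The rules $\One R$ and $\One L$ would be handled verbatim as in Proposition~\ref{Prop:sound}: for $\One R$ one invokes Lemma~\ref{Lm:deltaunit} to pass from $\delta \subseteq \UA = v(\One)$, and for $\One L$ with a non-empty $\Gamma$ or $\Delta$ one uses the $\Af$-unit identity $\UA \circ v(D_1) = v(D_1)$ on an adjacent formula $D_1$, which only needs $v(D_1) \in \Af$. The one case where the weaker setting plays a genuine role is $\Zero L$: the antecedent interpretation has the shape $v(\Gamma) \circ \ZA \circ v(\Delta)$, and although intermediate subcompositions may no longer lie in $\Af$, Corollary~\ref{Cor:zerozero2} still guarantees $v(\Gamma) \circ \ZA \circ v(\Delta) \subseteq \ZA \subseteq v(C)$. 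This corollary is precisely the replacement for Lemma~\ref{Lm:zerozero}, which needed $\Af$-closure under $\circ$ to yield equalities rather than mere inclusions.

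I do not anticipate a serious obstacle; the only conceptual point to verify is that every place where Proposition~\ref{Prop:sound} tacitly relied on $\Af$-closure under composition either lies in the product rules (which have been removed) or is covered by one of the lemmas from Section~\ref{S:constants} in its sharpened, inclusion-only form. Extending the argument from pure derivability to derivability from a set of hypotheses $\Hc$ is routine, since hypotheses enter the induction as extra base cases whose validity is assumed.
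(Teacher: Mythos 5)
Your proposal is correct and follows essentially the same route as the paper: the key observations in both are that every product-free formula is valued inside $\Af$, that the $\One L$/$\One R$ cases go through verbatim via Lemma~\ref{Lm:deltaunit}, and that the $\Zero L$ case is rescued by Corollary~\ref{Cor:zerozero2} in place of the full Lemma~\ref{Lm:zerozero}. Your explicit remark that truth of a sequent is an inclusion computed in $\Pc(W\times W)$, so that the antecedent's composition need not itself lie in $\Af$, is a useful clarification that the paper leaves implicit.
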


\begin{proof}
 Again, we basically repeat the proof of Proposition~\ref{Prop:sound}. First we notice that for any product-free formula $A$ we have $v(A) \in \Af$. (For formulae with product, this is now, in general, not true.) This allows considering the cases of $\One R$ and $\One L$ exactly as in the proof of Proposition~\ref{Prop:sound}, using Lemma~\ref{Lm:deltaunit}. For $\Zero L$, we use Corollary~\ref{Cor:zerozero2} and show that the interpretation $R$ of the antecedent is a subset of $\ZA$. Since $C$, the succedent, is product-free, we have $R \subseteq \ZA \subseteq v(C)$.
\end{proof}

Notice that the original calculus $\LEwuz$ is, in general, not sound w.r.t.\ these models even in the weak sense. For example,  Lemma~\ref{Lm:zerozero2} lacks the other inclusion, $\ZA \subseteq \ZA \circ R$, which prevents from validating the sequent $\Zero \yields \Zero \cdot r$. Fortunately, this sequent is not a product-free one.

Now we are ready to prove strong completeness for $\LEDwuz$.

\begin{thm}\label{Th:strong}
 The calculus $\LEDwuz$ is strongly complete w.r.t.\ the class of product-free non-standard square R-models.
\end{thm}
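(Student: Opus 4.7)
My plan is to adapt the canonical model construction from Section~\ref{S:weak} to the setting with hypotheses, exploiting the product-free restriction to circumvent precisely the step that blocked strong completeness in the full system. As a first normalisation, I would replace every sequent $A_1,\ldots,A_n \to B$ in $\Hc$ by the equivalent $\Lambda \to A_n \BS (A_{n-1} \BS \cdots \BS (A_1 \BS B)\cdots)$; this is derivable back and forth in $\LEDwuz$ using only $\BS L$ and $\BS R$. Then I would take the contrapositive: assume $\Hc \not\vdash_{\LEDwuz} \Pi \to B$ and construct a product-free non-standard square R-model validating $\Hc$ but refuting $\Pi \to B$. By Theorem~\ref{Th:strongconserv} I also have $\Hc \not\vdash_{\LEwuz} \Pi \to B$, so I may freely use the richer system $\LEwuz$ (with multiplication) inside the construction.

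Next I would rerun the step-by-step graph construction of Lemma~\ref{Lm:graph}, now reading $\vdash$ as derivability from $\Hc$ in $\LEwuz$. Transitions of types~0 and~1 (giving properties \ref{It:Aprev} and \ref{It:Asucc} and preserving \ref{It:transrefl}--\ref{It:cdot1}) are validated by purely logical derivations that are insensitive to the presence of $\Hc$. Transitions of type~2 — and hence property \ref{It:cdot2} — are simply discarded: they are needed only to interpret the product, and the obstruction flagged in the final footnote of Lemma~\ref{Lm:graph} (the case $x=z$, where one would need to split $\One \to B\cdot C$ into $\One \to B$ and $\One \to C$) is exactly where strong completeness fails. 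Since the goal language is product-free, property \ref{It:cdot2} is never invoked downstream, so this is the conceptual heart of the argument.

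From the resulting graph $G$ I would form $\Md^\Af_0 = (V,\Af,\UA,\ZA,v)$ by setting
\[
 v(A) = \{(x,y) \in E \mid \Hc \vdash_{\LEwuz} \ell(x,y) \to A\},
 \qquad \Af = \{v(A) \mid A \text{ product-free}\},
\]
with $\UA = v(\One)$ and $\ZA = v(\Zero)$. Closure of $\Af$ under $\BS, \SL, \cap$ is syntactic; closure under $\circ$ is not claimed, which is legitimate under the relaxed definition. The clauses for $\BS, \SL, \wedge$ in Definition~\ref{Df:Rmod} are verified just as in Theorem~\ref{Th:weak}, using only properties \ref{It:transrefl}--\ref{It:cdot1} and \ref{It:Aprev}--\ref{It:Asucc}. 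That $\UA$ is the $\Af$-unit reduces to checking $\UA \circ v(A) = v(A) = v(A)\circ \UA$ directly: the $\subseteq$ direction follows from property~\ref{It:cdot1} together with derivability of $\One\cdot A \to A$, and the $\supseteq$ direction from taking the witness $y=x$ since $\ell(x,x)=\One$. That $\ZA$ is the $\Af$-zero is immediate from $\Hc \vdash_{\LEwuz} \Zero \to A$ (the axiom $\Zero L$) combined with cut.

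Finally I would discharge the two completeness tasks. Each $\Lambda \to A_i \in \Hc$ is validated because, for any $x \in V$, one has $\ell(x,x)=\One$ and $\Hc \vdash_{\LEwuz} \One \to A_i$ (apply $\One L$ to the axiom $\Lambda \to A_i$ of $\Hc$), so $\delta \subseteq v(A_i)$. To refute $\Pi \to B$, translate it into the equivalent product-free formula $C = D_n \BS \cdots \BS (D_1 \BS B)$. If $\Pi \to B$ were true in $\Md^\Af_0$, then by strong soundness $\Lambda \to C$ would be true, so $(x,x) \in v(C)$ for every $x$, giving $\Hc \vdash_{\LEwuz} \One \to C$; a cut with the axiom $\Lambda \to \One$ yields $\Hc \vdash_{\LEwuz} \Lambda \to C$, hence $\Hc \vdash_{\LEwuz} \Pi \to B$, and by Theorem~\ref{Th:strongconserv} we would also have $\Hc \vdash_{\LEDwuz} \Pi \to B$, contradicting the assumption. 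The main obstacle I anticipate is the bookkeeping needed to check that dropping property~\ref{It:cdot2} really does not impact any verification — in particular that $\UA = v(\One)$ acts as a two-sided unit on the strictly smaller class $\Af$, even though the relations $v(B\cdot C)$ have been excluded from $\Af$.
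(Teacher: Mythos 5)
Your proposal is correct and follows essentially the same route as the paper's proof: drop the type-2 transitions (hence property~\ref{It:cdot2}) to get the graph of Lemma~\ref{Lm:graphH} relative to $\Hc$, build the $\Hc$-universal model with $\Af$ restricted to interpretations of product-free formulae, verify the unit and zero directly (the explicit cut with $\One \cdot A \to A$ replacing the unavailable $v(\One \cdot A)$ argument), and finish with Theorem~\ref{Th:strongconserv}. The only cosmetic difference is that you phrase the argument contrapositively, whereas the paper argues directly from semantic entailment to derivability.
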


As mentioned in the previous section, inside our proofs we shall use multiplication. Due to this reason, the set $\Fm$ still denotes the set of all $\LEwuz$ formulae. Let us fix a set $\Hc$ of product-free sequents and prove a weaker version of Lemma~\ref{Lm:graph} for derivability from $\Hc$. Namely, in the presence of a non-empty set of hypotheses we are unable to maintain property~\ref{It:cdot2} of the graph, so we just omit it. By $\Hc \vdash A \to B$ we mean ``$A \to B$ is derivable from $\Hc$ in $\LEwuz$.''

\begin{lem}\label{Lm:graphH}
There exists  a labelled directed graph $G = (V,E,\ell)$ where $V \ne \varnothing$, $E \subseteq V \times V$, and $\ell \colon E \to \Fm$, such that the following holds:
\begin{enumerate}
  \item $E$ is transitive;
  \item $E$ is reflexive and $\ell(x,x) = \One$ for any $x \in V$;
  \item $E$ is antisymmetric: if $x \ne y$ and $(x,y) \in E$, then $(y,x) \notin E$;
  \item if $(x,y) \in E$ and $(y,z) \in E$, then $\Hc \vdash \ell(x,z) \yields \ell(x,y) \cdot \ell(y,z)$;
  \item for any $y \in V$ and any formula $A$ there exists $x \in V$ such that 
  for any $z \in V$, if $(y,z) \in E$, then $\ell(x,z) =  A \cdot \ell(y,z)$; 
  \item for any $y \in V$ and any formula $A$ there exists  $z \in V$ such that for any $x \in V$, if $(x,y) \in E$, then $\ell(x,z) = \ell(x,y) \cdot A$.
 \end{enumerate}
\end{lem}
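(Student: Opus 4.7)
The plan is to replicate, with minor simplifications, the step-by-step construction from the proof of Lemma~\ref{Lm:graph}. Throughout, every derivability claim is taken relative to $\Hc$ (that is, $\Hc \vdash$ replaces $\vdash$); since the graph invariants in that proof were preserved using only the cut rule together with associativity and monotonicity of the base calculus, they remain valid once the axiom set is enlarged by $\Hc$. The crucial simplification is that the new statement drops the analogue of property~\ref{It:cdot2} of Lemma~\ref{Lm:graph}, so transitions of type~2 can be discarded entirely: only the ``prepend'' and ``append'' transitions are needed, and they are exactly what is required to realise the last two items of the statement.

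Concretely, I would fix a countable vertex set $V$, start from $G_0 = (\{\star\},\{(\star,\star)\}, (\star,\star) \mapsto \One)$, and drive the construction by a single bijective schedule $\sigma \colon \NN \to (V \times \Fm) \times \NN$. At stage $2i$, reading $\sigma(i)=((y,A),k)$, I perform a type-0 transition: if $y$ lies in the current vertex set, I add a fresh vertex $x \in V$ with loop labelled $\One$ and, for every $z$ with $(y,z)$ currently in $E$, a new edge $(x,z)$ labelled $A \cdot \ell(y,z)$ (in particular edge $(x,y)$ with label $A \cdot \One$). At stage $2i{+}1$ I perform the symmetric type-1 transition, adding a fresh successor $z$ of $y$ with edges $(x,z)$ labelled $\ell(x,y) \cdot A$ for each predecessor $x$ of $y$. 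Both transitions are copied verbatim from the proof of Lemma~\ref{Lm:graph}.

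The verifications that each $G_{n+1}$ still satisfies transitivity, the $\One$-loop condition, antisymmetry, and the composition upper bound (item~4 of the statement) are unchanged from that proof; they merely propagate old labels along the new edges via $\cdot R$, $\cdot L$, associativity, and $Cut$, all of which remain admissible in $\LEwuz$ extended by $\Hc$. Taking $G = \bigcup_n G_n$ then preserves items~1--4. For items~5 and~6, bijectivity of $\sigma$ guarantees that every pair $(y,A)$ occurs as the first component of $\sigma(i)$ for infinitely many $i$; choosing such an $i$ large enough that $y$ is already present exhibits the vertex demanded by item~5 (from the type-0 transition at stage $2i$) or item~6 (from the type-1 transition at stage $2i{+}1$).

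The only delicate point is checking that no step of the original argument silently depends on the splitting property~\ref{It:cdot2} we have now dropped. Inspection shows it does not: preservation of properties \ref{It:transrefl}--\ref{It:cdot1} relies solely on the composition upper bound, never on any lower bound. In fact, the need to excise property~\ref{It:cdot2} from the hypothesis setting is itself anticipated in the footnote at the end of the proof of Lemma~\ref{Lm:graph}: once $\Hc$ is present, cut elimination for $\LEwuz$ fails, so one can no longer extract $\Hc \vdash \One \to B$ and $\Hc \vdash \One \to C$ from $\Hc \vdash \One \to B \cdot C$. Restricting attention to the six properties listed in Lemma~\ref{Lm:graphH} sidesteps exactly this obstacle, and the construction described above then goes through without further difficulty.
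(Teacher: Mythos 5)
Your proposal is correct and takes essentially the same route as the paper: the paper's proof of Lemma~\ref{Lm:graphH} likewise just reruns the construction of Lemma~\ref{Lm:graph} with only type-0 and type-1 transitions, observing that type-2 transitions (and hence the cut-elimination step they require, which fails in the presence of $\Hc$) were needed only for the now-omitted splitting property. Your rescheduling to a single two-phase schedule $\sigma$ is an inessential simplification.
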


\begin{proof}
 We go through the same infinite procedure as in the proof of Lemma~\ref{Lm:graph}, but apply only transitions of types~0 and~1, not of type~2. As transitions of type~2 were necessary exactly for obtaining property~\ref{It:cdot2} of graph $G$, without them we shall obtain a graph obeying all other properties, w.r.t.\ derivability from $\Hc$ instead of pure derivability.
\end{proof}

Now we prove strong completeness for $\LEDwuz$.

\begin{proof}[Proof of Theorem~\ref{Th:strong}]

Given a set of hypotheses $\Hc$, let us construct an {\em $\Hc$-universal} product-free non-standard square R-model $\Md_{\Hc}^{\Af} = (W, \Af, \UA, \ZA, v_{\Hc})$ in a way similar to how we did it in the proof of Theorem~\ref{Th:weak}, using the graph~$G$ given by Lemma~\ref{Lm:graphH}. 

We take $W = V$ and define $v_{\Hc}(A) = \{ (x,y) \in E \mid \Hc \vdash \ell(x,y) \yields A \}$ where $A$ is a variable or constant ($\Zero$ or $\One$). Below we shall prove that this equality holds for any {\em product-free} formula $A$. Then we may define $\Af = \{ v_{\Hc}(A) \mid \mbox{$A$ is a product-free formula} \}$ and assert that $\Af$ is closed under $\BS$, $\SL$, and $\cap$ (but possibly not under $\circ$). Finally, we shall show that $v_{\Hc}(\One)$ and $v_{\Hc}(\Zero)$ are indeed the $\Af$-unit and the $\Af$-zero, respectively.

The fact that $v_{\Hc}(A) = \{ (x,y) \mid \Hc \vdash \ell(x,y) \yields A \}$ is proved by induction on the structure of $A$ exactly as in the proof of Theorem~\ref{Th:weak}. Indeed, now we do not need to handle multiplication, so the omitted property of graph $G$ is not used. As for the differences connected to $\Af$ (which is now not closed under composition), this proof has nothing to do with $\Af$, so these differences are irrelevant. Finally, the shift from pure derivability to derivability from $\Hc$ does not affect the argument.

Finally, we check that $v_{\Hc}(\One)$ is the $\Af$-unit and $v_{\Hc}(\Zero)$ is the $\Af$-zero. The latter is simpler: $v_{\Hc}(\Zero) = \{ (x,y) \mid \Hc \vdash \ell(x,y) \yields \Zero \} \subseteq \{ (x,y) \mid \Hc \vdash \ell(x,y) \yields A \} = v_{\Hc}(A)$ for any product-free formula~$A$. 

For the former, formally we cannot use the argument from the proof of Theorem~\ref{Th:weak}, since it uses $v_{\Hc}(\One \cdot A)$, and $\One \cdot A$ is not product-free. Let us provide an explicit argument for $v_{\Hc}(\One) \circ v_{\Hc}(A) = v_{\Hc}(A)$ (the other equality, $v_{\Hc}(A) \circ v_{\Hc}(\One) = v_{\Hc}(A)$, is symmetric). First, we have $\ell(x,x) = \One$ for any $x$, whence $\delta \subseteq v_{\Hc}(\One)$. This establishes one inclusion: $v_{\Hc}(A) = \delta \circ v_{\Hc}(A) \subseteq v_{\Hc}(\One) \circ v_{\Hc}(A)$. For the opposite inclusion, let $(x,y) \in v_{\Hc}(\One) \circ v_{\Hc}(A)$. This means that, for some $w$, we have $(x,w) \in v_{\Hc}(\One)$ and $(w,y) \in v_{\Hc}(A)$. Thus, we have $\Hc \vdash \ell(x,w) \yields \One$, $\Hc \vdash \ell(w,y) \yields A$, and, by property~\ref{It:cdot1} of graph $G$,  $\Hc \vdash \ell(x,y) \yields \ell(x,w) \cdot \ell(w,y)$. Now we give the following derivation:
\[
 \infer[Cut]
 {\ell(x,y) \yields A}
 {\ell(x,y) \yields \ell(x,w) \cdot \ell(w,y)  & \infer[Cut]{\ell(x,w) \cdot \ell(w,y) \yields A}{\infer[\cdot L]{\ell(x,w) \cdot \ell(w,y) \yields \One \cdot A}{\infer[\cdot R]{\ell(x,w), \ell(w,y) \yields \One \cdot A}{\ell(x,w) \yields \One & \ell(w,y) \yields A}} & \infer[\cdot L]{\One \cdot A \yields A}{\infer[\One L]{\One, A \yields A}{A \yields A}}}}
\]
which establishes $(x,y) \in v_{\Hc}(A)$.

\vskip 3pt
Now, using our $\Hc$-universal model $\Md_{\Hc}^{\Af}$, we show strong completeness. Again, we restrict ourselves to sequents with empty antecedents, both in $\Hc$ and as the goal sequent. Suppose $\Hc$ semantically entails a sequent of the form $\yields B$ on the class of product-free non-standard square R-models. 

Let us show that all sequents from $\Hc$ are true in $\Md_{\Hc}^{\Af}$. Take one such sequent, $\yields A$. For any $x$, we have $\ell(x,x) = \One$, and $\Hc \vdash \One \yields A$ by application of $\One L$. This means $(x,x) \in v_{\Hc}(A)$ for any $x$, that is, $\delta \subseteq v_{\Hc}(A)$. 

Now, by semantic entailment, $\yields B$ is also true in $\Md_{\Hc}^{\Af}$, that is, $\delta \subseteq v_{\Hc}(B)$. This means $\Hc \vdash \One \yields B$ (since $W$ is non-empty, we take an arbitrary $x$ with $(x,x) \in \delta$ and $\ell(x,x) = \One$), and by cut with ${} \yields \One$ we get $\Hc \vdash {} \yields B$.

The final issue is that derivability here is understood as derivability in $\LEwuz$. Using Theorem~\ref{Th:strongconserv} (strong conservativity), we transform it into derivability in $\LEDwuz$.
\end{proof}

Finally, as promised above, as a corollary we obtain strong completeness of $\LEDw$ w.r.t.\ square R-models, in the most standard sense (Definition~\ref{Df:Rmod}), cf. Remark~\ref{Rem:filter} above. 

\begin{thm}
 The calculus $\LEDw$ is strongly complete w.r.t.\ the class of square R-models.
\end{thm}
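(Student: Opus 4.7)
The plan is to obtain this theorem as an almost immediate corollary of Theorem~\ref{Th:strong} combined with the constant-free part of Theorem~\ref{Th:strongconserv}. Suppose $\Hc \vDash \Pi \to B$ on the class of standard square R-models, where $\Hc$ and $\Pi \to B$ are in the language of $\BS$, $\SL$, $\wedge$ (no product, no constants). What I need to show is that this same semantic entailment holds on the wider class of product-free non-standard square R-models, since then Theorem~\ref{Th:strong} gives $\Hc \vdash_{\LEDwuz} \Pi \to B$ and the second clause of Theorem~\ref{Th:strongconserv} downgrades this to $\Hc \vdash_{\LEDw} \Pi \to B$, which is exactly the desired conclusion.

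The core step is therefore a reduct argument: from any product-free non-standard square R-model $\Md^{\Af} = (W, \Af, \UA, \ZA, v)$ I extract a standard square R-model $\Md' = (W, W \times W, v')$ by keeping the valuation of variables and extending it to all constant-free formulas by the standard clauses of Definition~\ref{Df:Rmod}. A straightforward induction on formula structure shows $v'(A) = v(A)$ for every constant-free $A$, because the defining equations for $\BS$, $\SL$, and $\wedge$ do not refer to $\UA$, $\ZA$, or the carrier $\Af$ at all; they only quantify over $W$ and intersect existing relations. In particular, the interpretation of a constant-free antecedent is the same relational composition in both models, and the truth conditions of constant-free sequents coincide. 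Hence any product-free non-standard square R-model validating the constant-free set $\Hc$ gives rise to a standard square R-model validating $\Hc$, which by assumption validates $\Pi \to B$; transporting this truth back to $\Md^{\Af}$ establishes the desired semantic entailment on the non-standard class.

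Having bridged the two semantic classes, the rest is bookkeeping: Theorem~\ref{Th:strong} produces a derivation of $\Pi \to B$ from $\Hc$ in $\LEDwuz$, and then Theorem~\ref{Th:strongconserv} eliminates the constants $\Zero$ and $\One$ from that derivation to yield a derivation in $\LEDw$. There is no real obstacle here — the argument parallels the spirit of Remark~\ref{Rem:filter} — but the one point worth verifying carefully is precisely that the reduct operation preserves the interpretation of \emph{every} constant-free formula (not merely variables), since otherwise the transfer of truth for the antecedent of $\Pi \to B$ and for each sequent of $\Hc$ could fail.
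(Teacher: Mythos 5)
Your proposal is correct and follows essentially the same route as the paper: both arguments hinge on the observation (cf.\ Remark~\ref{Rem:filter}) that the constant-free reduct of a (product-free) non-standard square R-model is a standard square R-model with identical interpretations of constant-free formulae, and both finish by invoking strong conservativity (Theorem~\ref{Th:strongconserv}) to pass from $\LEwuz$-derivability to $\LEDw$. The only cosmetic difference is that you transfer the semantic entailment across the whole class of models and then cite Theorem~\ref{Th:strong} as a black box, whereas the paper applies the reduct directly to the single $\Hc$-universal model $\Md_{\Hc}^{\Af}$ built in that theorem's proof; the mathematical content is the same.
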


\begin{proof}
 Let a sequent $\to B$ semantically follow from $\Hc$ on square R-models, provided that $\to B$ and $\Hc$ do not include $\cdot$, $\One$, and $\Zero$. Then we take the $\Hc$-universal model $\Md_{\Hc}^{\Af}$ from the proof of Theorem~\ref{Th:strong} and consider its reduct to the language without constants. This reduct is a standard square R-model which satisfies all sequents from $\Hc$. Therefore, it also validates $\to B$, whence $\Hc \vdash {} \to B$. Using Theorem~\ref{Th:strongconserv}, we transform derivability in $\LEwuz$ into derivability in $\LEDw$.
\end{proof}

Along with Theorem~\ref{Th:AndrekaMikulasLE} by Andr\'eka and Mikul\'as this shows that $\cdot$ or $\wedge$ on its own does not leads to failure of strong completeness of $\LEw$, but together they do.

\section*{Concluding Remarks}

The completeness and incompleteness results presented in this article reveal the following peculiarities of relational semantics for the Lambek calculus and its extensions. First, Lambek's non-emptiness restriction indeed matters, and without it strong completeness surprisingly fails when the system includes both multiplicative and additive conjunctions (i.e., product and intersection). Weak completeness still holds. One of the conjunctions is fine for strong completeness. Second, adding one or both explicit constants~$\Zero$ and~$\One$ ruins completeness even in the weak sense. Fortunately, there are easy and arguably natural ways of modifying the models which restore completeness.

The counterexamples to weak completeness with constants are actually very deep ones. They show that these constants allow building constructions (e.g., $\Zero \SL (\Zero \SL A)$, $\One \wedge B$, or $\One \SL (C \SL C)$) which obey structural rules not available for arbitrary formulae. In particular, formulae of the form $\Zero \SL (\Zero \SL A)$ even obey classical Boolean logic! This suggests that axiomatising the {\em standard} interpretations for these constants is probably a hard problem. Here one would have to somehow hybridise the Lambek calculus itself and another calculus with more structural rules. (For comparison, see~\cite{KuznJANCL} where such attempts were made for the case of language semantics.)

From the point of view of complexity, we see that these constructions using constants behave much like exponential or subexponential modalities (i.e., enable structural rules, including contraction), which would probably make the complete theories for standard interpretations undecidable, cf.~\cite{LMSS,ChvalovskyHorcik2016}. These undecidability conjectures are supported by the analogous situation with constant~$\One$ in language models, where undecidability was established~\cite{KanKuzSce2021IC}. Even more interesting things in complexity could happen if we add some sort of infinitary operations, like Kleene star or iterated divisions, cf.~\cite{KuzSpeAPAL}. By now, we leave these questions aside as an interesting area of further research.

\bibliographystyle{alphaurl}
\bibliography{Rmod}

\end{document}